\newtheorem{myDef}{Definition}
\definecolor{lime}{HTML}{A6CE39}
\DeclareRobustCommand{\orcidicon}{%
    \begin{tikzpicture}
    \draw[lime, fill=lime] (0,0) 
    circle [radius=0.16] 
    node[white] {{\fontfamily{qag}\selectfont \tiny ID}};    \draw[white, fill=white] (-0.0625,0.095) 
    circle [radius=0.007];    \end{tikzpicture}
    \hspace{-2mm}}
\xdef\csname orcid\x\endcsname{\noexpand\href{https://orcid.org/\csname orcidauthor\x\endcsname}{\noexpand\orcidicon}}
\begin{document}



\title{Hard-Label Cryptanalytic Extraction of Neural Network Models}

\index{Chen, Yi}
\index{Dong, Xiaoyang}
\index{Guo, Jian}
\index{Shen, Yantian}
\index{Wang, Anyu}
\index{Wang, Xiaoyun}

\titlerunning{Hard-Label Cryptanalytic Extraction of Neural Network Models}
%

\author{Yi Chen\inst{1}\orcidA{} \and
Xiaoyang Dong\inst{2,5}\orcidB{} \and
Jian Guo\inst{3}\orcidC{} \and
Yantian Shen\inst{4}\orcidD{} \and
Anyu~Wang\inst{1,5}\orcidE{} \and
Xiaoyun Wang\inst{1,5,6}\textsuperscript{(\Letter)}\orcidF{}}

%
\authorrunning{Y. Chen et al.}


\institute{
Institute for Advanced Study, 
Tsinghua University, Beijing, China,
\email{chenyi2023@mail.tsinghua.edu.cn},
\email{\{anyuwang, xiaoyunwang\}@tsinghua.edu.cn}
 \\ \and	
Institute for Network Sciences and Cyberspace, BNRist, 
Tsinghua University, Beijing, China,
\email{xiaoyangdong@tsinghua.edu.cn} \\ \and	
School of Physical and Mathematical Sciences,
Nanyang Technological University, Singapore,
\email{guojian@ntu.edu.sg} 	\\ \and	
Department of Computer Science and Technology, 
Tsinghua University, Beijing, China,
\email{shenyt22@mails.tsinghua.edu.cn}  \\ \and	
Zhongguancun Laboratory, Beijing, China  \\ \and 
Shandong Key Laboratory of Artificial Intelligence Security, Shandong, China 
}

\maketitle              
\begin{abstract}

The machine learning problem of 
extracting neural network parameters 
has been proposed for nearly three decades.
Functionally equivalent extraction is a crucial goal
for research on this problem.
When the adversary has access to 
the raw output of neural networks, various attacks, 
including those presented at CRYPTO 2020 and EUROCRYPT 2024, 
have successfully achieved this goal. 
However, this goal is not achieved 
when neural networks operate under a hard-label setting 
where the raw output is inaccessible.

In this paper, 
we propose the first attack that theoretically achieves 
functionally equivalent extraction under the hard-label setting,
which applies to ReLU neural networks.
The effectiveness of our attack is 
validated through practical experiments 
on a wide range of ReLU neural networks,
including neural networks
trained on two real benchmarking datasets
(MNIST, CIFAR10) widely used in computer vision.
For a neural network consisting of $10^5$ parameters,
our attack only requires several hours on a single core.

\keywords{Cryptanalysis \and ReLu Neural Networks \and 
                  Functionally Equivalent Extraction \and 
                  Hard-Label.}
\end{abstract}
%
%
%


\section{Introduction}
\label{sec:introduction}

Extracting all the parameters (including weights and biases)
of a neural network (called victim model) 
is a long-standing open problem which
is first proposed by cryptographers and mathematicians 
in the early nineties of the last 
century~\cite{DBLP:conf/siemens/BlumR93,
Fefferman1994ReconstructingAN},
and has been widely studied by research groups 
from both industry and 
academia~\cite{DBLP:conf/kdd/LowdM05,
DBLP:journals/csur/OliynykMR23,
DBLP:conf/uss/JagielskiCBKP20,
DBLP:conf/icml/RolnickK20,
DBLP:conf/uss/BatinaBJP19,
DBLP:conf/uss/TramerZJRR16,
DBLP:conf/crypto/CarliniJM20,
DBLP:journals/iacr/CanalesMartinezCHRSS23}.

In previous research~\cite{DBLP:conf/kdd/LowdM05,
DBLP:journals/csur/OliynykMR23,
DBLP:conf/uss/JagielskiCBKP20,
DBLP:conf/icml/RolnickK20,
DBLP:conf/uss/TramerZJRR16,
DBLP:conf/crypto/CarliniJM20,
DBLP:journals/iacr/CanalesMartinezCHRSS23},
one of the most common attack scenarios is as follows.
The victim model (denoted by $f_{\theta}$ 
where $\theta$ denotes the parameters) 
is made available as an Oracle $\mathcal{O}$, 
then the adversary generates inputs $x$
to query $\mathcal{O}$ and collects the feedback $\zeta$
to extract the parameters.
This is similar to a cryptanalysis problem:
$\theta$ is considered the secret key,
and the adversary tries to recover the secret key $\theta$,
given the pairs $\left(x, \zeta \right)$~\cite{DBLP:conf/crypto/CarliniJM20}.
If $f_{\theta}$ contains $m$ parameters 
(64-bit floating-point numbers),
then the secret key $\theta$ contains $64 \times m$ bits,
and the computation complexity of brute force searching
is $2^{64 \times m}$.

Consider that there may be isomorphisms for neural networks,
e.g., permutation and scaling for 
ReLU neural networks~\cite{DBLP:conf/icml/RolnickK20}.
An important concept 
named \emph{functionally equivalent extraction}
is summarized and proposed 
in~\cite{DBLP:conf/uss/JagielskiCBKP20}.

\paragraph{\textup{\textbf{Functionally Equivalent Extraction.  }}}
Denote by $\mathcal{X}$ the input space of 
the victim model $f_{\theta}$.
Functionally equivalent extraction aims at generating
a model $f_{\widehat{\theta}}$ (i.e., the extracted model),
such that $f_{\theta} (x) = f_{\widehat{\theta}} (x)$ 
holds for all $x \in \mathcal{X}$, 
where $f_{\theta} (x)$ and $f_{\widehat{\theta}} (x)$ are,
respectively, the raw output of the victim model and 
the extracted model~\cite{DBLP:conf/uss/JagielskiCBKP20}.
Such extracted model $f_{\widehat{\theta}}$ is called
the functionally equivalent model of $f_{\theta}$
(also say that $f_{\theta}$ and $f_{\widehat{\theta}}$ are
isomorphic~\cite{DBLP:conf/icml/RolnickK20}).
Since $f_{\widehat{\theta}}$ behaves the same as $f_{\theta}$,
the adversary can explore the properties of
$f_{\theta}$ by taking $f_{\widehat{\theta}}$ as a perfect substitute
\footnote[1]{
Due to the isomorphisms
introduced in~\cite{DBLP:conf/icml/RolnickK20}, 
the parameters $\widehat{\theta}$ of the extracted model may be 
different from that $\theta$ of the victim model,
but it does not matter as long as
$f_{\widehat{\theta}}$ is the functionally equivalent model of $f_{\theta}$.
}.

Functionally equivalent extraction is 
hard~\cite{DBLP:conf/uss/JagielskiCBKP20}.
Consider the isomorphisms (permutation and scaling) 
introduced in~\cite{DBLP:conf/icml/RolnickK20}.
Scaling can change parameters and permutation does not. 
For a ReLU neural network $f_{\theta}$ that contains 
$m$ parameters (64-bit floating-point numbers) and $n$ neurons, 
from the perspective of the above cryptanalysis problem, 
even though 
\emph{scaling can be applied to each neuron once},
the adversary still needs to recover $64 \times (m - n)$ secret key bits.
Note that the case of $m \gg n$ is common for neural networks.
For example, for the ReLU neural networks 
extracted by Carlini et al. at 
CRYPTO 2020~\cite{DBLP:conf/crypto/CarliniJM20},
the pairs $(m, n)$ are $(25120, 32)$, $(100480, 128)$,
$(210, 20)$, $(420, 40)$, and $(4020, 60)$.
Besides, even if we only recover a few bits 
(instead of $64$ bits) of each parameter,
the number of secret key bits to be  recovered is still large,
particularly in the case of large $m$.

When a functionally equivalent model 
$f_{\widehat{\theta}}$ is obtained,
the adversary can do more 
damage (e.g., adversarial attack~\cite{DBLP:conf/sp/Carlini017}) 
or steal user privacy (e.g., property inference 
attack~\cite{DBLP:conf/ccs/GanjuWYGB18}).
Thus, even though the cost is expensive,
the adversary has the motivation to
achieve functionally equivalent extraction.


\paragraph{\textup{\textbf{Hard-label Setting.  }}}
According to the taxonomy in~\cite{DBLP:conf/uss/JagielskiCBKP20},
when the Oracle is queried,
there are $5$ types of feedback given by the Oracle:
(1) label (the most likely class label, also called hard-label), 
(2) label and score (the most-likely class label and its probability score), 
(3) top-k scores, (4) all the label scores, 
(5) the raw output (i.e., $f_{\theta} (x)$). 
When the Oracle only returns the hard-label,
we say that the victim model $f_{\theta}$ 
(i.e., neural networks in this paper) 
works under the hard-label setting~\cite{DBLP:conf/ilp/GalstyanC07}.

To the best of our knowledge, 
there are no functionally equivalent extraction attacks
that are based on the first four types of feedback so far.
From the perspective of cryptanalysis, 
the raw output $f_{\theta}(x)$ 
is equivalent to the complete ciphertext 
corresponding to the plaintext (i.e., the query $x$).
The other four types of feedback only reveal
some properties of the ciphertext (raw output $f_{\theta}(x)$).
For example, when $f_{\theta}(x) \in \mathbb{R}$,
the hard-label only tells whether $f_{\theta} (x) > 0$ 
holds or not~\cite{DBLP:conf/ilp/GalstyanC07}.
Among the five types of feedback, 
the raw output leaks the most information, 
while the hard-label (i.e., the first type of feedback) leaks 
the least~\cite{DBLP:conf/uss/JagielskiCBKP20}.

Assuming that the feedback of the Oracle 
is the raw output,
Jagielski et al. propose the first 
functionally equivalent extraction attack against 
ReLU neural networks with 
one hidden layer~\cite{DBLP:conf/uss/JagielskiCBKP20},
which is extended to 
deeper neural networks in~\cite{DBLP:conf/icml/RolnickK20}.
At CRYPTO 2020, Carlini et al. propose 
a differential extraction attack~\cite{DBLP:conf/crypto/CarliniJM20} 
that requires fewer queries than 
the attack in~\cite{DBLP:conf/icml/RolnickK20}.
However, the differential extraction attack 
requires an exponential amount of time,
which is addressed by Canales{-}Mart{\'{\i}}nez et al. 
at EUROCRYPT 
2024~\cite{DBLP:journals/iacr/CanalesMartinezCHRSS23}.
Note that the extraction attacks 
in~\cite{DBLP:conf/icml/RolnickK20,
DBLP:conf/crypto/CarliniJM20,
DBLP:journals/iacr/CanalesMartinezCHRSS23}
are also based on the assumption that 
the feedback is the raw output. 
Due to the dependence on the raw output,
all the authors in~\cite{DBLP:conf/uss/JagielskiCBKP20,
DBLP:conf/icml/RolnickK20,
DBLP:conf/crypto/CarliniJM20,
DBLP:journals/iacr/CanalesMartinezCHRSS23}, state that
the hard-label setting (i.e., the feedback is the first type)
is a defense against functionally equivalent extraction.


The above backgrounds lead to the
question not studied before
\begin{equation}
\begin{array}{c}
\text{\emph{Is it possible to achieve
functionally equivalent extraction against}} \\
\text{\emph{neural network models under the hard-label setting?}} 
\end{array} \nonumber
\end{equation}

\subsection{Our Results and Techniques}

\subsubsection{Results. }
We have addressed this question head-on in this paper.
In total, the answer is yes,
and we propose the first 
functionally equivalent extraction attack against
ReLU neural networks under the hard-label setting.
Here, the definition of functionally equivalent extraction 
proposed in~\cite{DBLP:conf/crypto/CarliniJM20}
is extended reasonably.

\begin{myDef} [\textbf{\textup{Extended Functionally Equivalent Extraction}}]
\label{def:EFEE}
The goal of the \textup{extended functionally equivalent extraction} 
is to generate a model $f_{\widehat{\theta}}$ (i.e., the extracted model),
such that $f_{\widehat{\theta}} (x) = c \times f_{\theta} (x)$ 
holds for all $x \in \mathcal{X}$, 
where $c > 0$ is a fixed constant, 
$f_{\theta} (x)$ and $f_{\widehat{\theta}} (x)$ are,
respectively, the raw output of the victim model and 
the extracted model.
The extracted model $f_{\widehat{\theta}}$ is
the \emph{functionally equivalent model} of
the victim model $f_{\theta}$.
\end{myDef}

Since $f_{\widehat{\theta}} (x) = c \times f_{\theta} (x)$ holds
for all $x \in \mathcal{X}$, i.e., 
$f_{\widehat{\theta}}$ is a simple scalar product of $f_{\theta}$,
the adversary still can explore the properties
of the victim model $f_{\theta}$ by taking 
$f_{\widehat{\theta}}$ as a perfect substitute.
This is why we propose this extended definition.
From the perspective of cryptanalysis,
this extended definition allows the adversary not to guess
the $64$ bits of the constant $c$.
To evaluate the efficacy of our model extraction attacks,
and quantify the degree to 
which a model extraction attack has succeeded in practice,
we generalize the metric named $(\varepsilon, \delta)$-functional equivalence proposed
in~\cite{DBLP:conf/crypto/CarliniJM20}.


\begin{myDef} 
[\textbf{\textup{Extended $(\varepsilon, \delta)$-Functional Equivalence}}]	
\label{def:equivalence-in-practice}
Two models $f_{\widehat{\theta}}$ and $f_{\theta}$
are $(\varepsilon, \delta)$-functional equivalent on $\mathcal{S}$ if
there exists a fixed constant $c > 0$ such that
\begin{equation}
\mathbf{Pr}_{x \in \mathcal{S}} \left[ 
\left| f_{\widehat{\theta}} (x) -  c \times f_{\theta} (x) \right|  
\leqslant \varepsilon
\right]
\geqslant 1 - \delta  \nonumber
\end{equation}	
\end{myDef}

In this paper,
we propose two model extraction attacks,
one of which applies to $0$-deep neural networks,
and the other one applies to $k$-deep neural networks. 
The former attack is the basis of the latter attack.
Our model extraction attacks theoretically 
achieve functionally equivalent extraction
described in Definition~\ref{def:EFEE},
where the constant $c > 0$ is determined 
by the model parameter $\theta$.

We have also performed numerous experiments
on both untrained and trained neural networks,
for verifying the effectiveness of our model extraction attacks in practice.
The untrained neural networks are obtained 
by randomly generating model parameters.
To fully verify our attacks, 
we also adopt two real benchmarking image datasets 
(i.e., MNIST and CIFAR10) widely used in computer vision, 
and train many classifiers (i.e., trained neural networks) 
as the victim model.
Our model extraction attacks 
show good performances in experiments.
The complete experiment results refer to Tables~\ref{tab:attack_on_1_deep_nn}
and~\ref{tab:attack_on_trained_nn} in
Section~\ref{sec:experiments}.
The number of parameters of neural networks
in our experiments is up to $10^5$,
but the runtime of the proposed extraction 
attack on a single core is within several hours.
Our experiment code is uploaded to GitHub
(\url{
https://github.com/AI-Lab-Y/NN_cryptanalytic_extraction
}).

The analysis of the attack complexity is 
presented in
Appendix~\ref{appendix:attack_complexity}.
For the extraction attack on $k$-deep neural networks, 
its query complexity is about 
$\mathcal{O} \left( d_0 \times 2^n 
\times {\rm log}_2^{ \frac{1}{\epsilon} } \right)$,
where $d_0$ and $n$ are, respectively,  
the input dimension (i.e., the size of $x$) 
and the number of neurons,
$\epsilon$ is a precision chosen by the adversary.
The computation complexity is about
$\mathcal{O} \left( n \times 2^{n^2+n+k} \right)$,
where $n$ is the number of neurons and 
$k$ is the number of hidden layers.
The computation complexity of our attack is much lower
than that of brute-force searching.


\paragraph{\textup{\textbf{Techniques. }}}
By introducing two new concepts, 
namely model activation pattern and model signature, 
we obtained some findings as follows.    
A ReLU neural network is composed of 
a certain number of affine transformations 
corresponding to model activation patterns.   
Each affine transformation leaks partial information 
about neural network parameters, 
which is determined by the corresponding model activation pattern.
Most importantly, for a neural network that contains $n$ neurons, 
$n+1$ special model activation patterns 
will leak all the information about the 
neural network parameters.

Inspired by the above findings,
we design a series of methods to 
find decision boundary points, 
recover the corresponding affine transformations, 
and further extract the neural network parameters. 
These methods compose the complete model extraction attacks.

\subsubsection{Organization. }
The basic notations, threat model, attack goal
and assumptions are introduced in Section~\ref{sec:preliminaries}. 
Section~\ref{sec:concepts} introduces some auxiliary concepts.
Then we introduce the overview of our model extraction attacks, 
the idealized model extraction attacks, 
and some refinements in practice in the following three sections respectively. 
Experiments are introduced in Section~\ref{sec:experiments}. 
At last, we present more discussions 
about our work and conclude this paper.


\section{Preliminaries}
\label{sec:preliminaries}

\subsection{Basic Definitions and Notations}

This section presents some necessary definitions and notations.

\begin{myDef} 
[\textbf{\textup{$k$-Deep Neural 
Network~\cite{DBLP:conf/crypto/CarliniJM20}}}]
\label{def:k-deep-nn}
A \textup{$k$-deep neural network} $f_{\theta} (x)$ is a function 
parameterized by $\theta$ that takes inputs from an input space $\mathcal{X}$
and returns values in an output space $\mathcal{Y}$.
The function $f \colon \mathcal{X} \rightarrow \mathcal{Y}$
is composed of alternating linear layers $f_i$ and a non-linear
activation function $\sigma$:
\begin{equation}
f = f_{k+1} \circ \sigma \circ \cdots \circ \sigma \circ f_2 \circ \sigma \circ f_1 .
\end{equation}
\end{myDef}

In this paper, we exclusively study neural networks over
$\mathcal{X} = \mathbb{R}^{d_0}$ and $\mathcal{Y} = \mathbb{R}^{d_{k + 1}}$,
where $d_0$ and $d_{k + 1}$ are positive integers. 
As in~\cite{DBLP:conf/crypto/CarliniJM20,
DBLP:journals/iacr/CanalesMartinezCHRSS23}, 
we only consider neural networks 
using the ReLU~\cite{DBLP:conf/icml/NairH10} activation function,
given by $\sigma \colon x \mapsto \textup{max} (x, 0)$.

\begin{myDef} 
[\textbf{\textup{Fully Connected 
Layer~\cite{DBLP:conf/crypto/CarliniJM20}}}]
The $i$-th \textup{fully connected layer} of a neural network is
a function $f_i \colon \mathbb{R}^{d_{i-1}} \rightarrow \mathbb{R}^{d_i}$
given by the affine transformation
\begin{equation}
f_i (x) = A^{(i)} x + b^{(i)}.
\end{equation}
where $A^{(i)} \in \mathbb{R}^{d_i \times d_{i-1}}$ 
is a $d_i \times d_{i-1}$ \textup{weight} matrix,
$b^{(i)} \in \mathbb{R}^{d_i}$ is a $d_i$-dimensional \textup{bias} vector.
\end{myDef}

\begin{myDef} [\textbf{\textup{Neuron~\cite{DBLP:conf/crypto/CarliniJM20}}}]
A \textup{neuron} is a function determined by the corresponding
weight matrix, bias vector, and activation function.
Formally, the $j$-th neuron of layer $i$ is the function $\eta$ given by
\begin{equation}
\eta (x) = \sigma \left( A_j^{(i)} x + b_j^{(i)} \right),
\end{equation}  
where $A_j^{(i)}$ and $b_j^{(i)}$  denote, respectively, 
the $j$-th row of $A^{(i)}$ and $j$-th coordinate of $b^{(i)}$.
In a $k$-deep neural network, 
there are a total of $\sum_{i=1}^{k} d_i$ neurons.
\end{myDef}

\begin{myDef} [\textbf{\textup{Neuron 
State~\cite{DBLP:journals/iacr/CanalesMartinezCHRSS23}}}]
Let $\mathcal{V}(\eta; x)$ denote the value that neuron $\eta$ takes
with $x \in \mathcal{X}$ before applying $\sigma$. 
If $\mathcal{V}(\eta; x) > 0$, then $\eta$ is active,
i.e., the \textup{neuron state} is \textup{active}.
Otherwise, the neuron state is \textup{inactive}
\footnote[2]{
In~\cite{DBLP:conf/crypto/CarliniJM20,
DBLP:journals/iacr/CanalesMartinezCHRSS23}, 
the authors defined one more neuron state, 
namely critical, i.e., $\mathcal{V}(\eta; x) = 0$,
which is a special inactive state
since the output of neuron $\eta$ is $0$.
}.
The state of the $j$-th neuron in layer $i$ on input $x$ 
is denoted by $\mathcal{P}_j^{(i)} (x) \in \mathbb{F}_2$.
If $\mathcal{P}_j^{(i)} (x) = 1$, the neuron is active.
If $\mathcal{P}_j^{(i)} (x) = 0$, the neuron is inactive.
\end{myDef}

\begin{myDef} 
[\textbf{\textup{Neural Network 
Architecture~\cite{DBLP:conf/crypto/CarliniJM20}}}]
The \textup{architecture} of a fully connected  neural network 
captures the structure of $f_{\theta}$: (a) the number of layers, 
(b) the dimension $d_i$ of each layer $i = 0, \cdots , k+1$.
We say that $d_0$ is the dimension of the input to the neural network,
and $d_{k+1}$ denotes the number of outputs of the neural network. 
\end{myDef}

\begin{myDef} 
[\textbf{\textup{Neural Network 
Parameters~\cite{DBLP:conf/crypto/CarliniJM20}}}]
The \textup{parameters} $\theta$ of a $k$-deep neural network $f_{\theta}$ 
are the concrete assignments
to the weights $A^{(i)}$ and biases $b^{(i)}$ for $i \in \{1, 2, \cdots, k+1\}$.
\end{myDef}


When neural networks work under the hard-label setting,
the raw output $f_{\theta} (x)$ is processed 
before being returned~\cite{DBLP:conf/ilp/GalstyanC07}.
This paper considers the most common processing.
The raw output $f_{\theta} (x) \in \mathbb{R}^{d_{k+1}}$
is first transformed into a category probability vector
$\mathbf{P} \in \mathbb{R}^{d_{k+1}}$ 
by applying the Sigmoid (when $d_{k+1} = 1$) 
or Softmax  (when $d_{k+1} > 1$) function to 
$f_{\theta} (x)$~\cite{DBLP:journals/ijon/DubeySC22}.
Then, the category with the largest probability 
is returned as a hard-label. 
Definition~\ref{def:hard-label} summarizes 
the hard-label and corresponding decision conditions 
on the raw output $f_{\theta}(x)$.

\begin{myDef} [\textbf{\textup{Hard-Label}}]	
\label{def:hard-label}
Consider a $k$-deep neural network 
$f \colon \mathcal{X} \rightarrow \mathcal{Y}$ 
where $\mathcal{Y} \in \mathbb{R}^{d_{k+1}}$.
The \textup{hard-label} (denoted by $z$) is
related to the outputs $f_{\theta} (x)$. 
When $d_{k+1} = 1$, the hard-label $z \left( f_{\theta} (x) \right)$ is computed as
\begin{equation}
z  \left( f_{\theta} (x) \right) = \left\{
\begin{array}{l}
1, \,\, \textup{if} \,\, f_{\theta} (x) > 0 , \\
0, \,\, \textup{if} \,\, f_{\theta} (x) \leqslant 0 .\\
\end{array}
\right.
\end{equation}
When $d_{k+1} > 1$, 
the output $f_{\theta} (x)$ is a $d_{k+1}$-dimensional vector.
The hard-label $z  \left( f_{\theta} (x) \right)$ is 
the coordinate of the maximum of $f_{\theta} (x)$.
\footnote[3]{
If there are ties, i.e., multiple items of $f_{\theta} (x)$ share the same maximum, 
the hard-label is the smallest one of the coordinates of these items.
}
\end{myDef}

\subsection{Adversarial Goals and Assumptions}
\label{subsec:goal_and_assumptions}

There are two parties in a model extraction attack:
the oracle $\mathcal{O}$ who 
possesses the neural network $f_{\theta} (x)$, 
and the adversary who generates queries $x$ to the Oracle. 
Under the hard-label setting, 
the Oracle $\mathcal{O}$ returns the hard-label $z  \left( f_{\theta} (x) \right)$ 
in Definition~\ref{def:hard-label}.

\begin{myDef} [\textbf{\textup{Model Parameter Extraction Attack}}]
A \textup{model parameter extraction attack} 
receives Oracle access to
a parameterized function $f_{\theta}$ 
(i.e., a $k$-deep neural network in our paper)
and the architecture of $f_{\theta}$, 
and returns a set of parameters $\widehat{\theta}$
with the goal that 
$f_{ \widehat{\theta} }(x)$ is as similar as possible to 
$c \times f_{\theta}(x)$ where $c > 0$ is a fixed constant.
\end{myDef}

In this paper,
we use the $\widehat{\_}$ symbol to indicate an extracted parameter.
For example, $\theta$ is the parameters of the victim model $f_{\theta}$,
and $\widehat{\theta}$ stands for the parameters
of the extracted model $f_{\widehat{\theta}}$.

\paragraph{\textup{\textbf{Assumptions. }}}
We make the following assumptions of the Oracle $\mathcal{O}$ 
and the capabilities of the attacker:
\begin{itemize}
\item 
\textbf{Architecture knowledge.  } 
We require knowledge of the neural network architecture.

\item 
\textbf{Full-domain inputs. } 
We can feed arbitrary inputs from $\mathcal{X} = \mathbb{R}^{d_0}$.

\item
\textbf{Precise computations. }
$f_{\theta}$ is specified and evaluated 
using 64-bit floating-point arithmetic.

\item
\textbf{Scalar outputs. }
The output dimensionality is $1$,
i.e., $\mathcal{Y} = \mathbb{R}$.
\footnote[4]{This assumption is fundamental to our work.
Our attack only applies to the case of scalar outputs.}

\item
\textbf{ReLU Activations. }
All activation functions ($\sigma$'s) are the ReLU function.
\end{itemize}

Compared with the work in~\cite{DBLP:conf/crypto/CarliniJM20},
we remove the assumption of requiring 
the raw output $f_{\theta} (x)$ of the neural network.
Now, after querying the Oracle $\mathcal{O}$,
the attacker obtains the hard-label $z\left( f_{\theta} (x) \right)$.
In other words, the attacker only knows whether
$f_{\theta} (x) > 0$ holds or not.


\section{Auxiliary Concepts}
\label{sec:concepts}


To help understand our attacks,
this paper proposes some auxiliary concepts.

\subsection{Model Activation Pattern}



To describe all the neuron states,
we introduce a new concept named
\emph{Model Activation Pattern}.

\begin{myDef} [\textbf{\textup{Model Activation Pattern}}]
\label{def:MAP}
Consider a $k$-deep neural network $f_{\theta}$ 
with $n = \sum_{i=1}^{k}{d_i}$ neurons.
The \textup{model activation pattern} of $f_{\theta}$ 
over an input $x \in \mathcal{X}$ is 
a global description of the $n$ neuron states,
and is denoted by 
$\mathcal{P} (x) = (\mathcal{P}^{(1)} (x), \cdots, \mathcal{P}^{(k)} (x))$
where $\mathcal{P}^{(i)} (x) \in \mathbb{F}_2^{d_i}$
is the concatenation of $d_i$ neuron states 
(i.e., $\mathcal{P}_j^{(i)} (x), i \in \{1, \cdots, d_i\}$) in layer $i$.
\end{myDef}

In the rest of this paper, 
the notations $\mathcal{P}_j^{(i)} (x)$, $\mathcal{P}^{(i)}(x)$,
and $\mathcal{P}(x)$
are simplified as $\mathcal{P}_j^{(i)}$, $\mathcal{P}^{(i)}$,
and $\mathcal{P}$ respectively, 
when the meaning is clear in context.
Besides, $\mathcal{P}^{(i)} \in \mathbb{F}_2^{d_i}$ 
is represented by a $d_i$-bit integer. 
For example, $\mathcal{P}^{(i)} = 2^{j-1}$ means that
only the $j$-th neuron in layer $i$ is active,
and $\mathcal{P}^{(i)} = 2^{d_i} - 1$ means that
all the $d_i$ neurons are active.

When the model activation pattern is known,
one can precisely determine which neural network parameters 
influence the output $f_{\theta} (x)$.
Consider the $j$-th neuron $\eta$ in layer $i$.
Due to the ReLU activation function, 
if the neuron state is \emph{inactive},
neuron $\eta$ does not influence the output $f_{\theta} (x)$. 
As a result, all the weights $A_{?, j}^{(i+1)}$ and $A_{j, ?}^{(i)}$
( i.e., the elements of the $j$-th column of $A^{(i+1)}$,
and the $j$-th row of $A^{(i)}$ respectively)
and the bias $b_j^{(i)}$
do not affect the output $f_{\theta} (x)$.


\paragraph{Special `neuron'. }
For the convenience of introducing model extraction attacks later,
we regard the input $x \in \mathbb{R}^{d_0}$
and the output $f_{\theta} (x) \in \mathbb{R}$ as, 
respectively, $d_0$ and $1$ special `neurons' that are always active.
So we adopt two extra notations $\mathcal{P}^{(0)} = 2^{d_0} - 1$
and $\mathcal{P}^{(k+1)} = 2^1 - 1 = 1$, for describing the states
of the special $d_0 + 1$ `neurons'.
But if not necessary, we will omit the two notations.

\subsection{Model Signature}	\label{subsec:model_signature}
Consider a $k$-deep neural network $f_{\theta}$.
For an input $x \in \mathcal{X}$, 
$f_{\theta}$ can be described as an affine transformation
\begin{equation}
\begin{aligned}
f_{\theta} (x) &=  A^{(k+1)} \cdots 
          \left(  I_{\mathcal{P}}^{(2)} 
          \left( A^{(2)} 
          \left( I_{\mathcal{P}}^{(1)} 
          \left( A^{(1)} x + b^{(1)}
          \right) 
          \right) + b^{(2)} 
          \right)  
          \right) 
          \cdots  + b^{(k+1)}	\\
     &= A^{(k+1)}  I_{\mathcal{P}}^{(k)} A^{(k)} \cdots I_{\mathcal{P}}^{(2)}
           A^{(2)} I_{\mathcal{P}}^{(1)} A^{(1)} x + B_{\mathcal{P}}   
        = \varGamma _{\mathcal{P}} x + B_{\mathcal{P}}  ,
\end{aligned}
\end{equation}
where $\mathcal{P}$ is the model activation pattern over $x$,
$\varGamma _{\mathcal{P}} \in \mathbb{R}^{d_0}$, 
and $B_{\mathcal{P}} \in \mathbb{R}$.
Here, $I_{\mathcal{P}}^{(i)} \in \mathbb{R}^{d_i \times d_i}$ 
are $0$-$1$ diagonal matrices with
a $0$ on the diagonal's $j$-th entry 
when the neuron state $\mathcal{P}_{j}^{(i)}$ is $0$,
and $1$ when $\mathcal{P}_{j}^{(i)} = 1$.

The affine transformation is denoted by a tuple 
$(\varGamma _{\mathcal{P}}, B_{\mathcal{P}})$.
Except for $\mathcal{P}$,
the value of the tuple 
$(\varGamma _{\mathcal{P}}, B_{\mathcal{P}})$
is only determined by the neural network parameters, 
i.e., $A^{(i)}$ and $b^{(i)}, i \in \{1, \cdots, k+1\}$.
Once the value of any neural network parameters is changed,
the value of the tuple $(\varGamma _{\mathcal{P}}, B_{\mathcal{P}})$
corresponding to some $\mathcal{P}$'s will change too
\footnote[5]{
We do not consider the case of some neurons being always inactive, 
since such neurons are redundant and 
usually deleted by various network pruning methods 
(e.g.,~\cite{DBLP:conf/nips/HanPTD15})
before the neural network is deployed as a prediction service.
}.
Therefore, we regard the set of all the possible
tuples $(\varGamma _{\mathcal{P}}, B_{\mathcal{P}})$
as a unique \emph{model signature} of the neural network.

\begin{myDef} [\textbf{\textup{Model Signature}}]
For a $k$-deep neural network $f_{\theta} (x)$,
the \textup{model signature} denoted by $\mathcal{S}_{\theta}$
is the set of affine transformations
\begin{equation}
\mathcal{S}_{\theta} = 
\{(\varGamma _\mathcal{P},  B_\mathcal{P}) 
\text{ for all the } \mathcal{P}\text{'s} \}  .
\nonumber
\end{equation}
\end{myDef}

In~\cite{DBLP:journals/iacr/CanalesMartinezCHRSS23},
Canales{-}Mart{\'{\i}}nez et al. use the term `signature'
to describe the weights related to a neuron,
which is different from the model signature.
Except for the model signature,
we propose another important concept, 
namely \emph{normalized model signature}.

\begin{myDef} [\textbf{\textup{Normalized Model Signature}}]
\label{def:nms}
Consider a victim model $f_{\theta}$ and its model signature
$\mathcal{S}_{\theta} = 
\{(\varGamma _\mathcal{P},  B_\mathcal{P}) 
\text{ for all the } \mathcal{P}\text{'s} \}$ .  
Denote by $\varGamma_{\mathcal{P}, j}$ the $j$-th element of $\varGamma_{\mathcal{P}}$ for $j \in \{1, \cdots, d_0\}$.
Divide the set of $\mathcal{P}$'s into two subsets 
$\mathcal{Q}_1$ and $\mathcal{Q}_2$.
For each $\mathcal{P} \in \mathcal{Q}_1$, 
$\varGamma_{\mathcal{P}, j} = 0$ for $j \in \{1, \cdots, d_0\}$.
For each $\mathcal{P} \in \mathcal{Q}_2$, 
there is at least one non-zero element in $\varGamma_{\mathcal{P}}$,
without loss of generality,  
assume that $\varGamma_{\mathcal{P}, 1} \ne 0$.
Let $\mathcal{S}_{\theta}^{\mathcal{N}}$ be the following set
\begin{equation}
\mathcal{S}_{\theta}^{\mathcal{N}} = 
\left\{ 
\left(
\varGamma _\mathcal{P},  
B_\mathcal{P}
\right) 
\text{ for } \mathcal{P} \in \mathcal{Q}_1,
\left(
\frac{
\varGamma _\mathcal{P}
}{
\left| \varGamma_{\mathcal{P}, 1} \right|
},  
\frac{
B_\mathcal{P}
}{
\left| \varGamma_{\mathcal{P}, 1} \right|
} \right) 
\text{ for } \mathcal{P} \in \mathcal{Q}_2
\right\} .  	\nonumber
\end{equation}
The set $\mathcal{S}_{\theta}^{\mathcal{N}}$ 
is the \textup{normalized model signature} of $f_{\theta}$.
\end{myDef}

Shortly, the difference between the normalized model signature
$\mathcal{S}_{\theta}^{\mathcal{N}}$
and the initial model signature $\mathcal{S}_{\theta}$ is as follows.
For each $\mathcal{P} \in \mathcal{Q}_2$, 
i.e., there is at least one non-zero element
in $\varGamma_{\mathcal{P}}$ (without loss of generality, 
assume that the first element is non-zero, 
i.e.,$\varGamma_{\mathcal{P}, 1} \ne 0$),
we transform the parameter tuple into 
$\left(
\frac{
\varGamma _\mathcal{P}
}{
\left| \varGamma_{\mathcal{P}, 1} \right|
},  
\frac{
B_\mathcal{P}
}{
\left| \varGamma_{\mathcal{P}, 1} \right|
} \right)$.

In our attacks, 
the normalized model signature plays two important roles.
First, the recovery of all the weights $A^{(i)}$
relies on the subset $\mathcal{Q}_2$.
Second,  our attacks will produce many extracted models 
during the attack process while at most only one 
is the functionally equivalent model of $f_{\theta}$,
and the normalized model signature is used to filter
functionally inequivalent models.

\subsection{Decision Boundary Point}
Our attacks exploit a special class of inputs 
named \emph{Decision Boundary Points}.

\begin{myDef} [\textbf{\textup{Decision Boundary Point}}]
\label{def:DBP}
Consider a neural network $f_{\theta}$.
If an input $x$ makes $f_{\theta} (x) = 0$ hold,
$x$ is a \textup{decision boundary point}.
\end{myDef}

The extraction attacks presented
at CRYPTO 2020~\cite{DBLP:conf/crypto/CarliniJM20} and EUROCRYPT 2024~\cite{DBLP:journals/iacr/CanalesMartinezCHRSS23}
exploit a class of inputs, namely critical points.
Fig.~\ref{fig:DBP_critical_point} shows the difference between
critical points and decision boundary points.

\begin{figure}[htb]
\centering
\includegraphics[width=0.6\textwidth]{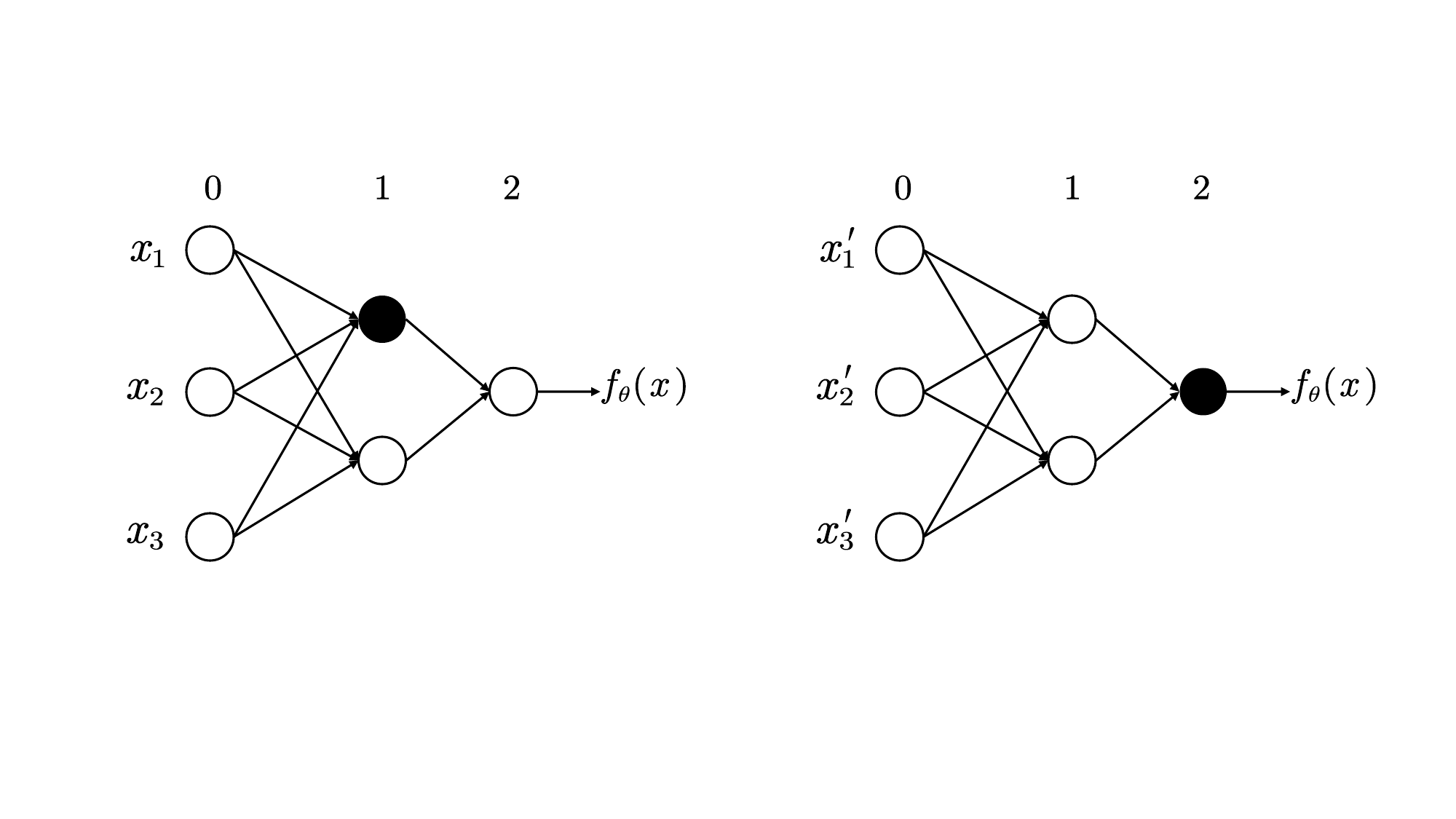}
\caption{
Left: the critical point $x = [x_1, x_2, x_3]^{\top}$ 
makes the output of one neuron (e.g., the solid black circle) $0$.
Right: the decision boundary point 
$x^{\prime} = [x_1^{\prime}, x_2^{\prime}, x_3^{\prime}]^{\top}$ 
makes the output of the neural network $0$.
}
\label{fig:DBP_critical_point}
\end{figure}

Critical points leak information on the neuron states,
i.e., whether the output of a neuron is $0$,
which is the core reason why the differential extraction attack
can efficiently extract model 
parameters~\cite{DBLP:conf/crypto/CarliniJM20}.
As a comparison, decision boundary points 
do not leak information on the neuron states.

Finding critical points relies on computing
partial derivatives based on the raw output $f_{\theta} (x)$,
refer to the work in~\cite{DBLP:conf/crypto/CarliniJM20,
DBLP:journals/iacr/CanalesMartinezCHRSS23}.
Thus, under the hard-label setting, 
we can not exploit critical points.

\section{Overview of Our Cryptanalytic Extraction Attacks}
\label{sec:overview}

Under the hard-label setting, 
i.e., the Oracle returns the most likely class 
$z \left( f_{\theta} (x) \right)$ instead of the raw output $f_{\theta}(x)$, 
only decision boundary points $x$ 
will leak the value of $f_{\theta}(x)$,
since $f_{\theta} (x) = 0$.
Motivated by this truth,
our cryptanalytic extraction attacks 
focus on decision boundary points.

\paragraph{\textup{\textbf{Attack Process.  }}}

At a high level, 
the complete attack contains five steps.
\begin{enumerate}
\item [$\cdot$]
\textbf{Step 1: collect decision boundary points. } 
The algorithm for finding decision boundary points
will be introduced in Section~\ref{subsec:find_DBP}.
Suppose that $M$ decision boundary points are collected.

\item [$\cdot$]
\textbf{Step 2: recover the normalized model signature. }
Recover the tuples 
$\left( \varGamma_{\mathcal{P}}, B_{\mathcal{P}} \right)$
corresponding to the $M$ decision boundary points.
After filtering duplicate tuples,
regard the set of the remaining tuples as
the (partial)
normalized model signature
$\mathcal{S}_{\theta}^{\mathcal{N}}$.
Suppose that the size of $\mathcal{Q}_2$ is $N$,
refer to Definition~\ref{def:nms}.
It means that  there are $N$ decision boundary points 
that can be used to recover weights $A^{(i)}$.


\item [$\cdot$]
\textbf{Step 3: recover weights layer by layer. }
Suppose that there are $n = \sum_{i=1}^{k} d_i$ 
neurons in the neural network.
Randomly choose $n + 1$ out of $N$ decision boundary points each time,
assign a specific model activation pattern $\mathcal{P}$ 
to each selected decision boundary point,
and recover the weights 
$A^{(1)}, \cdots, A^{(k+1)}$. 

\item [$\cdot$]
\textbf{Step 4: recover all the biases. } 
Based on recovered weights, 
recover all the biases $b^{(i)}, i \in \{1, \cdots, k+1\}$
simultaneously. 

\item [$\cdot$]
\textbf{Step 5: filter functionally inequivalent models. }
As long as $N \geqslant n + 1$ holds, 
we will obtain many extracted models,
but it is expected that at most only one is the functionally equivalent model.
Thus, we filter functionally inequivalent models in this step.
\end{enumerate}
Some functionally inequivalent models may not be filtered.
For each surviving extracted model,
we test the \emph{Prediction Matching Ratio} 
(PMR, introduced in Section~\ref{subsec:filter_wrong_models}) 
over randomly generated inputs, 
and take the one with the highest PMR as the final candidate.

In Step 2, 
we recover the tuple 
$\left( \varGamma_{\mathcal{P}}, B_{\mathcal{P}} \right)$
by the extraction attack on $0$-deep neural networks.
In Step 3, for layer $i > 1$, 
the weight vector $A_j^{(i)}$
of the $j$-th neuron ($j \in \{1, \cdots, d_i\}$) 
is recovered by solving a system of linear equations.
For layer $1$, except for selecting $d_1$ decision boundary points,
the recovery of the weights $A^{(1)}$
does not use any extra techniques.
In Step 4, all the biases are recovered 
by solving a system of linear equations.

\section{Idealized Hard-Label Model Extraction Attack}
\label{sec:DL_attack}

This section introduces $(0, 0)$-functionally equivalent
model extraction attacks under the hard-label setting, 
which assumes infinite precision arithmetic and 
recovers the functionally equivalent model.
We first introduce the 
$0$-deep neural network extraction attack,
which is used in the $k$-deep neural network extraction attack
to recover the normalized model signature.

Note that this section only (partially) involves Steps 2, 3, and 4 
introduced in Section~\ref{sec:overview}.
In the next section, 
we introduce the remaining steps
and refine the idealized attacks to work with finite precision.

%

\subsection{Zero-Deep Neural Network Extraction}
\label{subsec:0_nn_attack}

According to Definition~\ref{def:k-deep-nn},
zero-deep neural networks are affine functions
$f_{\theta}(x) \equiv A^{(1)} \cdot x + b^{(1)}$
where $A^{(1)} \in \mathbb{R}^{d_0}$,
and $b^{(1)} \in \mathbb{R}$. 
Let $A^{(1)} = [w_1^{(1)}, \cdots, w_{d_0}^{(1)}]$,
and $x = [x_1, x_2, \cdots, x_{d_0}]^\top$.
The model signature is 
$\mathcal{S}_{\theta} = \left( A^{(1)},  b^{(1)} \right)$.

Our extraction attack
is based on a decision boundary point 
$x$ (i.e., $f_{\theta} (x) = 0$),
and composed of $3$ steps:
(1) recover weight signs, i.e., the sign of $w_i^{(1)}$;
(2) recover weights $w_{i}^{(1)}$;
(3) recover bias $b^{(1)}$.

\paragraph{\textup{\textbf{Recover Weight Signs. }}}
Denote by $e_i \in \mathbb{R}^{d_0}$ 
the basis vector where only the $i$-th element is $1$ 
and other elements are $0$.

Let the decision boundary point $x$ 
move along the direction $e_i, i \in \{1, \cdots, d_0\}$,
and the moving stride is $s \in \mathbb{R}$
where $|s| > 0$.
Query the Oracle and 
obtain the hard-label $z\left( f (x + s e_i) \right)$,
then the sign of $w_i^{(1)}$ is
\begin{equation}
{\rm sign} (w_i^{(1)}) = \left\{ 
\begin{array}{c}
\,\,\,\,1, \,\, {\rm if} \,\, s>0 \,\, {\rm and} \,\,
                           z \left(f_{\theta} (x + s e_i) \right) = 1 , \\
-1, \,\, {\rm if} \,\, s<0 \,\, {\rm and} \,\,
                           z \left(f_{\theta} (x + s e_i) \right) = 1 . \\
\end{array}
\right.
\end{equation}
When $z\left( f(x + s e_i) \right) = 1$,
we have $f (x + s e_i) > 0$,
i.e., $w_i^{(1)} \times s > 0$.
Thus, the sign of $w_i^1$ is the same as that of $s$.
If $z\left( f(x + s e_i) \right) = 0$ always holds,
no matter if $s$ is positive or negative,
then we have $w_i^{(1)} = 0$.

\paragraph{\textup{\textbf{Recover Weights. }}}
Without loss of generality, assume that $w_1^{(1)} \ne 0$.

At first, let the decision boundary point $x$ move along 
$e_1$ with a moving stride $s_1$, such that
the hard-label of the new point $x + s_1 e_1$ is $1$,
i.e., $z \left( f(x + s_1 e_1) \right) = 1$.
Then, let the new point $x + s_1 e_1$ 
move along $e_i$ with a moving stride $s_i$
where $i \ne 1$ and $w_i^{(1)} \ne 0$ , 
such that  $x + s_1 e_1 + s_i e_i$ 
is a decision boundary point too.
As a result, we have 
\begin{equation}
s_1 w_1^{(1)} + s_i w_i^{(1)} = 0,
\end{equation}
and obtain the weight ratio $\frac{w_i^{(1)}}{w_1^{(1)}}$.
Since the signs of $w_i^{(1)}$ are known,
the final extracted weights are
\begin{equation}
\widehat{A}^{(1)} = \left[ 
\frac{w_1^{(1)}}{\left| w_1^{(1)} \right|}, 
\frac{w_2^{(1)}}{\left| w_1^{(1)} \right|}, \cdots, 
\frac{w_{d_0}^{(1)}}{\left| w_1^{(1)} \right|} \right]  .
\end{equation}


\paragraph{\textup{\textbf{Recover Bias. }}}
The extracted bias is 
$\widehat{b}^{(1)} = - \widehat{A}^{(1)} \cdot x 
                               = \frac{b^{(1)}}{\left| w_1^{(1)} \right|}$ .

Thus, the model signature of $f_{\widehat{\theta}}$ is
$\mathcal{S}_{\widehat{\theta}} = 
\left(\frac{ A^{(1)}}{\left| w_1^{(1)} \right|},  
\frac{b^{(1)}}{\left| w_1^{(1)} \right|} \right)$,
and $f_{\widehat{\theta}} (x) = \frac{f(x)}{\left| w_1^{(1)} \right|}$.

\begin{remark}
In~\cite{DBLP:conf/kdd/LowdM05},
the authors propose different methods to extract
the parameters of linear functions 
$f_{\theta} (x) = A^{(1)} \cdot x$.
Since this paper mainly focuses on 
the extraction attack on $k$-deep neural networks,
we do not deeply compare our attack with the methods 
in~\cite{DBLP:conf/kdd/LowdM05}.
\end{remark}

\subsection{$k$-Deep Neural Network Extraction}
\label{subsec:k_deep_nn_attack}

%
%
%

Basing the $0$-deep neural network extraction attack, 
we develop an extraction attack on $k$-deep neural networks.
Recall that, the expression of $k$-deep neural networks is
\begin{equation}	\label{eq:k-nn-form}
\begin{aligned}
f_{\theta}(x) &= A^{(k+1)} \cdots 
          \left(  I_{\mathcal{P}}^{(2)} 
          \left( A^{(2)} 
          \left( I_{\mathcal{P}}^{(1)} 
          \left( A^{(1)} x + b^{(1)}
          \right) 
          \right) + b^{(2)} 
          \right)  
          \right) 
          \cdots + b^{(k+1)}	\\
     &= \varGamma _{\mathcal{P}} x + B_{\mathcal{P}}  
\end{aligned}	
\end{equation}
where the model activation pattern is
$\mathcal{P} = 
\left(  
\mathcal{P}^{(0)},
\mathcal{P}^{(1)}, \cdots, \mathcal{P}^{(k)}, \mathcal{P}^{(k+1)}
\right)$ and
\begin{equation}
\varGamma_{\mathcal{P}} = 
     A^{(k+1)}  I_{\mathcal{P}}^{(k)} A^{(k)} \cdots I_{\mathcal{P}}^{(2)}
     A^{(2)} I_{\mathcal{P}}^{(1)} A^{(1)}  .
\end{equation}

\paragraph{Notations. }
Our attack recovers weights layer by layer.
Assuming that the weights of the first $i-1$ layers
have been recovered and we are trying to recover $A^{(i)}$
where $i \in \{1,  \cdots, k+1\}$, 
we describe $k$-deep
neural networks as:
\begin{equation}	\label{eq:f_recovered_and_unrecovered}
f_{\theta} (x) = \varGamma _{\mathcal{P}} x + B_{\mathcal{P}} = 
		      \mathcal{G}^{(i)} A^{(i)} 
                        C^{(i-1)} x + B_{\mathcal{P}},
\end{equation}
where $\mathcal{G}^{(i)} \in \mathbb{R}^{d_i}$
and $C^{(i-1)} \in \mathbb{R}^{d_{i-1} \times d_0}$ are,
respectively, 
related to the unrecovered part (excluding $A^{(i)}$)
and recovered part of the neural network $f_{\theta}$.

The values of $\mathcal{G}^{(i)}$ and $C^{(i-1)}$ are
\begin{equation}	\label{eq:G_C}
\begin{aligned}
\mathcal{G}^{(i)} &= 
\left\{
\begin{array}{l} 
A^{(k+1)} I_{\mathcal{P}}^{(k)} A^{(k)} \cdots 
    I_{\mathcal{P}}^{(i+1)} A^{(i+1)} I_{\mathcal{P}}^{(i)}  , 
	\,  \text{if} \,\, i \in \{1, \cdots, k\}	\\
1,  \,  \text{if} \,\,  i = k + 1	\\ 
\end{array}
\right.	\\
C^{(i-1)} &= 
\left\{
\begin{array}{l}
I, \,   \text{if} \,\, i = 1  \\
I_{\mathcal{P}}^{(i-1)} A^{(i-1)} \cdots I_{\mathcal{P}}^{(1)} A^{(1)} ,
	\,    \text{if} \,\,  i \in \{2,  \cdots, k+1\}		\\
\end{array}
\right.
\end{aligned}
\end{equation}
where $C^{(0)} = I \in \mathbb{R}^{d_0 \times d_0}$ 
is a diagonal matrix with a $1$ on each diagonal entry.

\paragraph{\textup{\textbf{Core Idea of Recovering Weights Layer by Layer.   }}}
To better grasp the attack details 
presented later,
we first introduce the core idea of 
recovering weights layer by layer.
Assuming that the extracted weights of 
the first $i-1$ layers are known, i.e., 
$\widehat{A}^{(1)}, \cdots, \widehat{A}^{(i-1)}$ are known,
we try to recover the weights in layer $i$.

To obtain the weight vector
$\widehat{A}_j^{(i)}$ of the $j$-th neuron 
(denoted by $\eta_j^{(i)}$) 
in layer $i \in \{1, \cdots, k+1\}$
\footnote[6]{
when $i = k+1$, it means that we are trying to recover
the weights $A^{(k+1)}$.
},
we exploit a decision boundary point 
with the model activation pattern
$\mathcal{P} = \left( \mathcal{P}^{(0)}, \mathcal{P}^{(1)}, 
\cdots, \mathcal{P}^{(k)}, \mathcal{P}^{(k+1)} \right)$ where
\begin{equation}
\mathcal{P}^{(i-1)} = 2^{d_{i-1}} - 1,  \,\,
\mathcal{P}^{(i)} = 2^{j-1} .
\end{equation}
It means that,
in layer $i$, only the $j$-th neuron is active,
and all the $d_{i-1}$ neurons in layer $i-1$ are active.
Fig~\ref{fig:overview_of_attack} shows a schematic diagram
under this scenario.

\begin{figure}[htb]
\centering
\includegraphics[width=0.9\textwidth]{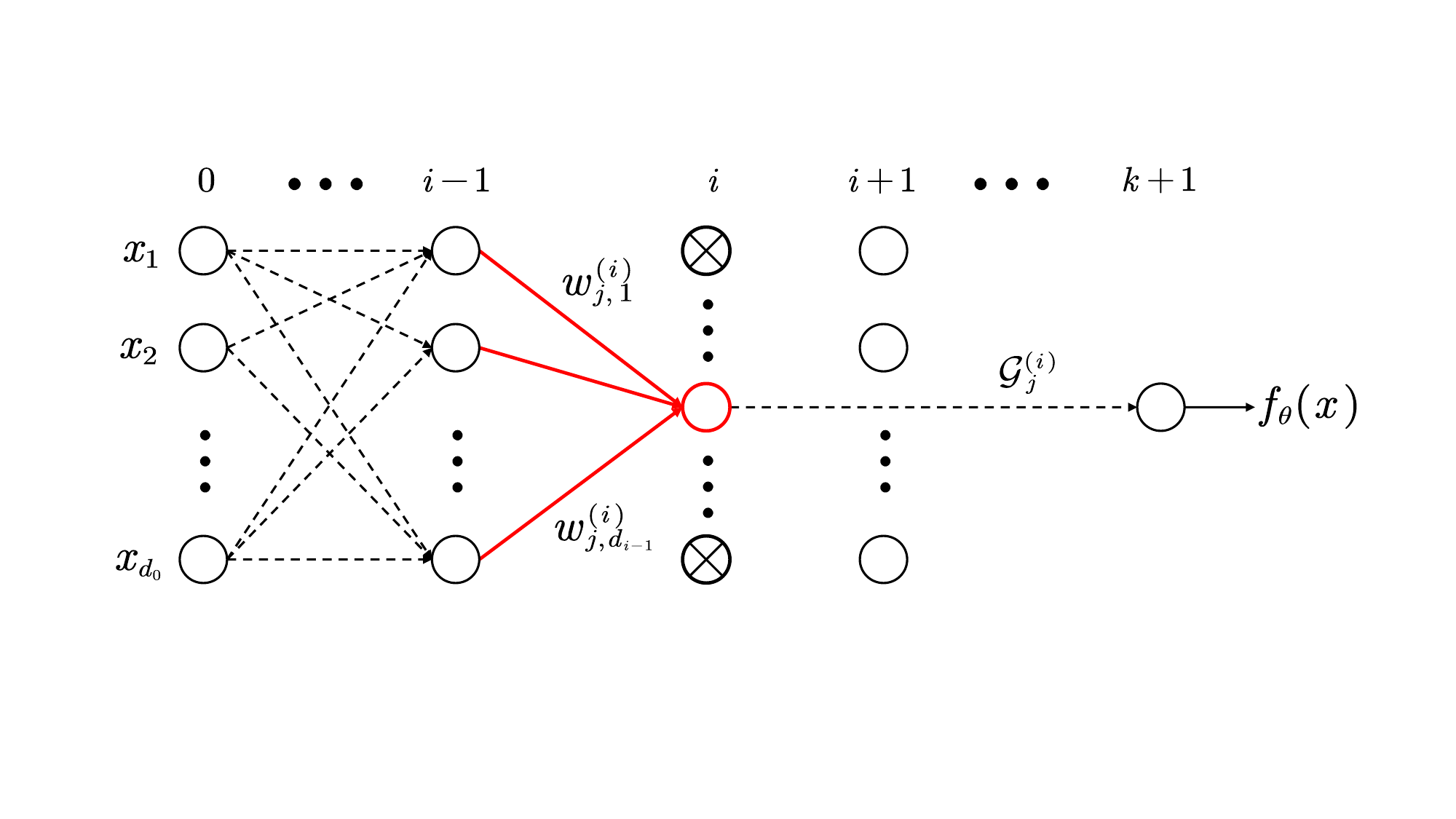}
\caption{
The core idea of recovering the weight vector 
of the $j$-th neuron in layer $i$.
Let $x = [x_1, \cdots, x_{d_0}]^\top$ be a decision boundary point
with $\mathcal{P}^{(i-1)} = 2^{d_{i-1}} - 1$,
$\mathcal{P}^{(i)} = 2^{j-1}$, i.e.,
in layer $i$, only the $j$-th neuron (the red hollow circle) is active,
and in layer $i-1$, all the neurons are active.
The first $i-1$ layers have been extracted,
and collapse into one layer.
All the layers starting from layer $i+1$ collapse into
a direct connection between 
the $j$-th neuron in layer $i$ and the final output.
}
\label{fig:overview_of_attack}
\end{figure}

Since $\mathcal{P}^{(i)} = 2^{j-1}$,
all the $k-i$ layers starting from layer $i+1$
collapse into a direct connection 
from $\eta_j^{(i)}$ to the output $f_{\theta} (x)$.
The weight of this connection is
$\mathcal{G}_j^{(i)}$, i.e., 
the $j$-th element of 
$\mathcal{G}^{(i)}$ (see Eq.~\eqref{eq:G_C}).
The expression (see Eq.~\eqref{eq:f_recovered_and_unrecovered}) 
of the $k$-deep neural network further becomes
\begin{equation}
f_{\theta} (x) = \varGamma _{\mathcal{P}} \cdot x + B_{\mathcal{P}}
= \mathcal{G}_j^{(i)} A_j^{(i)} \cdot C^{(i-1)} \cdot x + B_{\mathcal{P}},
\nonumber
\end{equation}
where $\mathcal{G}_j^{(i)} \in \mathbb{R}$
and $A_j^{(i)} \cdot C^{(i-1)} \in \mathbb{R}^{d_0}$.

In Step 2 (see Section~\ref{sec:overview}), 
applying the extraction attack on zero-deep neural networks,
we can obtain the tuple
$\left(
\frac{
\varGamma _\mathcal{P}
}{
\left| \varGamma_{\mathcal{P}, 1} \right|
},  
\frac{
B_\mathcal{P}
}{
\left| \varGamma_{\mathcal{P}, 1} \right|
} \right)$ where
\begin{equation}	\label{eq:gamma_p_v}
\varGamma _\mathcal{P} = 
\mathcal{G}_j^{(i)} A_j^{(i)} \cdot C^{(i-1)}, \,\,
\varGamma_{\mathcal{P}, v} =
\mathcal{G}_j^{(i)} A_j^{(i)} \cdot C_{?, v}^{(i-1)}.
\end{equation}
Here the symbol $C_{?, v}^{(i-1)}$ stands for
the $v$-th column vector of $C^{(i-1)}$. 

According to Eq.~\eqref{eq:gamma_p_v},
the value of each element of 
$\frac{
\varGamma _\mathcal{P}
}{
\left| \varGamma_{\mathcal{P}, 1} \right|
}$ is not related to the absolute value of $\mathcal{G}_j^{(i)}$,
i.e., the unrecovered part does not affect the affine transformation.
Then, basing the vector $\frac{
\varGamma _\mathcal{P}
}{
\left| \varGamma_{\mathcal{P}, 1} \right|
}$ and the extracted weights
$\widehat{A}^{(1)}, \cdots, \widehat{A}^{(i-1)}$,
we build a system of linear equations 
and solve it to obtain $\widehat{A}_j^{(i)}$.
Next, we introduce more attack details.





\paragraph{\textup{\textbf{Recover Weights in Layer $1$.  }}}
To recover the weight vector of 
the $j$-th neuron  in layer $1$,
we exploit the model activation pattern $\mathcal{P}$ where
\begin{equation}	\label{eq:MAPs_for_layer_1}
\mathcal{P}^{(1)} = 2^{j -1}; \,\,
\mathcal{P}^{(i)} = 2^{d_i} - 1, \,\, 
{\rm for} \,\, i \in \{0, 2, 3, \cdots, k+1\}.
\end{equation}
It means that,
in layer $1$, only the $j$-th neuron is active,
and all the neurons in other layers are active.


Under this model activation pattern,
according to Eq.~\eqref{eq:G_C}, we have
\begin{equation}
\mathcal{G}^{(1)} = A^{(k+1)} A^{(k)} \cdots 
                                 A^{(2)} I_{\mathcal{P}}^{(1)}  .
\end{equation}
Now, the expression of 
the $k$-deep neural network is
\begin{equation}
f_{\theta} (x) = 
      \mathcal{G}_j^{(1)}
     \left( A_j^{(1)} x +  b_j^{(1)}\right) + 
     B_{\left( \mathcal{P}^{(2)}, \cdots, \mathcal{P}^{(k)} \right)} 
= \mathcal{G}_j^{(1)} A_j^{(1)} x + B_{\mathcal{P}}  
\end{equation}
where 
$A_j^{(1)} = \left[ w_{j,1}^{(1)}, \cdots, w_{j, d_0}^{(1)} \right]$, 
$\mathcal{G}_j^{(1)} \in \mathbb{R}$ is the $j$-th element 
of $\mathcal{G}^{(1)}$.
As for $B_{\left( \mathcal{P}^{(2)}, \cdots, 
\mathcal{P}^{(k)} \right)} \in \mathbb{R}$,
it is a constant determined by 
$\left( \mathcal{P}^{(2)}, \cdots, \mathcal{P}^{(k)} \right)$.
In other words,
when $\mathcal{P}^{(1)}$ changes,
the value of 
$B_{\left( \mathcal{P}^{(2)}, \cdots, \mathcal{P}^{(k)} \right)}$
does not change.

Recall that, in Step 2, we have recovered the following weight vector
\begin{equation}
\frac{
\varGamma_{\mathcal{P}}
}{
\left| \varGamma_{\mathcal{P}, 1} \right|
} = 
\left[
\frac{\mathcal{G}_j^{(1)} w_{j,1}^{(1)}}
{\left|  \mathcal{G}_j^{(1)} w_{j,1}^{(1)} \right|}, 
\cdots,
\frac{\mathcal{G}_j^{(1)} w_{j,d_0}^{(1)}}
{\left|  \mathcal{G}_j^{(1)} w_{j,1}^{(1)} \right|}
\right],   j \in \{1, \cdots, d_1\} .
\end{equation}
In this step, our target is to obtain
$\widehat{A}_{j}^{(1)}$ where
\begin{equation}	\label{eq:k-attack-step-2-target}
\widehat{A}_j^{(1)} = 
\left[ \widehat{w}_{j,1}^{1}, \cdots, \widehat{w}_{j,d_0}^{1} \right] =
\left[ \frac{w_{j,1}^{1}}{ \left|w_{j,1}^{1} \right|},
\cdots,
         \frac{w_{j,d_0}^{1}}{ \left|w_{j,1}^{1} \right|} \right] ,
 j \in \{1, \cdots, d_1\}.
\end{equation}
Therefore, we need to determine $d_1$ signs,
i.e., the signs of 
$\mathcal{G}_j^{(1)}$
for $\mathcal{P}^{(1)} = 2^{j-1}$ 
where $j \in \{1, \cdots, d_1\}$.

Since $A_j^{(1)} x + b_j^{(1)} > 0$,
we know that
$\mathcal{G}_j^{(1)}
\times B_{\left( \mathcal{P}^{(2)}, \cdots, \mathcal{P}^{(k)} \right)} < 0$ 
holds for $j \in \{1, \cdots, d_1\}$,
which tells us that 
the above $d_1$ signs are the \emph{same}.
Thus, by guessing $1$ sign, 
i.e., the sign of $\mathcal{G}_j^{(1)}$ for
$\mathcal{P}^{(1)} \in \{2^{1-1}, \cdots, 2^{d_1-1}\}$,
we obtain $d_1$ weight vectors 
presented in Eq.~\eqref{eq:k-attack-step-2-target}.

\paragraph{\textup{\textbf{Recover Weights in Layer $i$ ($i > 1$).  }}}
To recover the weight vector of the $j$-th neuron in layer $i$,
we exploit the model activation pattern $\mathcal{P}$ where
\begin{equation}	\label{eq:MAPs_for_layer_i}
\mathcal{P}^{(i)} = 2^{j - 1}; \,\,
\mathcal{P}^{(q)} = 2^{d_q} - 1, \,\, 
{\rm for} \,\, q \in \{0, \cdots, i-1, i+1, \cdots, k+1\}.
\end{equation}
It means that, 
in layer $i$, only the $j$-th neuron is active,
and all the neurons in other layers are active.


Under this model activation pattern,
according to Eq.~\eqref{eq:G_C}, we have
\begin{equation}	\label{eq:current_G_C}
\begin{aligned}
\mathcal{G}^{(i)} &= 
\left\{
\begin{array}{l} 
A^{(k+1)} A^{(k)} \cdots A^{(i+2)} A^{(i+1)} I_{\mathcal{P}}^{(i)},
	\,  \text{if} \,\, i \in \{2, \cdots, k\}	,\\
1,  \,  \text{if} \,\,  i = k + 1	,\\ 
\end{array}
\right.	\\
C^{(i-1)} &= 
	A^{(i-1)} A^{(i-2)}  \cdots A^{(1)} ,
	\,    \text{if} \,\,  i \in \{2,  \cdots, k+1\}	.	\\
\end{aligned}
\end{equation}
Now,
the expression of $k$-deep neural networks becomes
\begin{equation}	\label{eq:k-deep-nn-form}
\begin{aligned}
f_{\theta} (x) &= 
        \mathcal{G}_j^{(i)}
	\left(
        A_j^{(i)}  C^{(i-1)}  x +
	B_{\left( \mathcal{P}^{(1)}, \cdots, \mathcal{P}^{(i)} \right)}
	\right)  +
        B_{\left( \mathcal{P}^{(i+1)}, \cdots, \mathcal{P}^{(k)} \right)}  \\
&=  \mathcal{G}_j^{(i)}  A_j^{(i)} C^{(i-1)}  x +
       B_{\mathcal{P}} ,
\end{aligned}
\end{equation}
where $A_{j}^{(i)} = \left[ w_{j,1}^{(i)}, 
            \cdots, w_{j, d_{i-1}}^{(i)}  \right]$, 
$\mathcal{G}_j^{(i)} \in \mathbb{R}$ and
$C^{(i-1)} \in \mathbb{R}^{d_{i-1} \times d_0}$.
Besides,  $B_{\left( \mathcal{P}^{(i+1)}, \cdots, 
\mathcal{P}^{(k)} \right)} \in \mathbb{R}$ is not related to 
$\left( \mathcal{P}^{(1)}, \cdots, \mathcal{P}^{(i)} \right)$,
and only determined by 
$\left( \mathcal{P}^{(i+1)}, \cdots, \mathcal{P}^{(k)} \right)$,
i.e.,  
$B_{\left( \mathcal{P}^{(i+1)}, \cdots, 
\mathcal{P}^{(k)} \right)} $ is the same constant
for $j \in \{1, \cdots, d_i\}$.

Let us further rewrite $f_{\theta} (x)$ 
in Eq.~\eqref{eq:k-deep-nn-form} as 
\begin{equation}
f_{\theta} (x) = 
\mathcal{G}_j^{(i)}
\left(
	\left( \sum_{v = 1}^{d_{i-1}}{w_{j, v}^{(i)} C_{v, 1}^{(i-1)} } \right)  x_1 + \cdots +
	\left( \sum_{v = 1}^{d_{i-1}}{w_{j, v}^{(i)} C_{v, d_0}^{(i-1)} } \right)  x_{d_0}
\right)
+ B_{\mathcal{P}}
\end{equation}
where $C_{v, u}^{(i-1)} \in \mathbb{R}$ is the $u$-th element of 
the $v$-th row vector of $C^{(i-1)}$. 


In Step 2, using the zero-deep neural network extraction attack,
we have recovered the following $d_i$ 
weight vectors ($j \in \{1, \cdots, d_i\}$)
\begin{equation}  \label{eq:k-nn-step-2}
\frac{
\varGamma_{\mathcal{P}}
}{
\left| \varGamma_{\mathcal{P}, 1} \right|
} =
\left[
\frac{
\mathcal{G}_j^{(i)}
\left( \sum_{v = 1}^{d_{i-1}}{w_{j, v}^{(i)} C_{v, 1}^{(i-1)} } \right)
}
{
\left|  
\mathcal{G}_j^{(i)}
\left( \sum_{v = 1}^{d_{i-1}}{w_{j, v}^{(i)} C_{v, 1}^{(i-1)} } \right)
\right|
},
\cdots, 
\frac{
\mathcal{G}_j^{(i)}
\left( \sum_{v = 1}^{d_{i-1}}{w_{j, v}^{(i)} C_{v, d_0}^{(i-1)} } \right)
}
{
\left|  
\mathcal{G}_j^{(i)}
\left( \sum_{v = 1}^{d_{i-1}}{w_{j, v}^{(i)} C_{v, 1}^{(i-1)} } \right)
\right|
}
\right].
\end{equation}
In this step, our target is to obtain the weight vector
$\widehat{A}_j^{(i)} = \left[ \widehat{w}_{j, 1}^{(i)}, 
\cdots, \widehat{w}_{j, d_{i-1}}^{(i)} \right]$.

It is clear that we need to guess the sign of 
$\mathcal{G}_j^{(i)}$
for $j \in \{1, \cdots, d_i\}$.
Again, all the $d_i$ signs are the same.
Consider the expression
in Eq.~\eqref{eq:k-deep-nn-form}.
Since the $j$-th neuron is active, its output exceeds $0$, i.e.,
\begin{equation}
A_j^{(i)}  C^{(i-1)}  x +
	B_{\left( \mathcal{P}^{(1)}, \cdots, \mathcal{P}^{(i)} \right)}  >  0, 
j \in \{1, \cdots, d_{i}\}.   \nonumber
\end{equation}
Then 
$\mathcal{G}_j^{(i)} \times
B_{\left( \mathcal{P}^{(i+1)}, \cdots, \mathcal{P}^{(k)} \right)} < 0 $
holds for $j \in \{1, \cdots, d_i\}$.
At the same time,
since 
$B_{\left( \mathcal{P}^{(i+1)}, \cdots, \mathcal{P}^{(k)} \right)}$ 
is a constant,
all the $d_i$ signs are the same.
Therefore, by guessing one sign, i.e., 
the sign of $\mathcal{G}_j^{(i)}$,
based on Eq.~\eqref{eq:k-nn-step-2},
we obtain
\begin{equation}  \label{eq:k-nn-step-2-v2}
\left[
\frac{
\sum_{v = 1}^{d_{i-1}}{w_{j, v}^{(i)} C_{v, 1}^{(i-1)} }
}
{
\left|  
\sum_{v = 1}^{d_{i-1}}{w_{j, v}^{(i)} C_{v, 1}^{(i-1)} } 
\right|
},
\cdots, 
\frac{
 \sum_{v = 1}^{d_{i-1}}{w_{j, v}^{(i)} C_{v, d_0}^{(i-1)} } 
}
{
\left|  
 \sum_{v = 1}^{d_{i-1}}{w_{j, v}^{(i)} C_{v, 1}^{(i-1)} }
\right|
}
\right], j \in \{1, \cdots, d_i\}.
\end{equation}

Note that $\widehat{C}_{v, u}^{(i-1)}$ can be obtained
using Eq.~\eqref{eq:current_G_C},
since $\widehat{A}^{(1)}, \cdots, \widehat{A}^{(i-1)}$ are known.
Then, basing the vector in Eq.~\eqref{eq:k-nn-step-2-v2},
we build a system of linear equations
\begin{equation}	\label{eq:soe-of-k-nn}
\left\{
\begin{array}{c}
\sum_{v = 1}^{d_{i-1}}{\widehat{w}_{j, v}^{(i)} \widehat{C}_{v, 1}^{(i-1)} }
=
\frac{
\sum_{v = 1}^{d_{i-1}}{w_{j, v}^{(i)} C_{v, 1}^{(i-1)} }
}
{
\left|  
\sum_{v = 1}^{d_{i-1}}{w_{j, v}^{(i)} C_{v, 1}^{(i-1)} } 
\right|
} , \\
\vdots   \\
\sum_{v = 1}^{d_{i-1}}{\widehat{w}_{j, v}^{(i)} \widehat{C}_{v, d_0}^{(i-1)} }
=
\frac{
\sum_{v = 1}^{d_{i-1}}{w_{j, v}^{(i)} C_{v, d_0}^{(i-1)} }
}
{
\left|  
\sum_{v = 1}^{d_{i-1}}{w_{j, v}^{(i)} C_{v, 1}^{(i-1)} } 
\right|
} , \\
\end{array}
\right.
\end{equation}
When $d_0 \geqslant d_{i-1}$
\footnote[7]{
The case of $d_0 \geqslant d_{i-1}$
is common in various applications, 
particularly in computer vision~\cite{DBLP:conf/ccs/GanjuWYGB18, 
DBLP:conf/wacv/LongCSH19, DBLP:conf/iccv/PerazziWGS15}, 
since the dimensions of images or videos are often large.}, 
we obtain $\widehat{A}_{j}^{(i)} =
\left[ \widehat{w}_{j,1}^{(i)}, \cdots,  \widehat{w}_{j, d_{i-1}}^{(i)}\right]$
by solving the above system of linear equations.
Lemma~\ref{lemma:w-i-of-k-nn} summarizes the expression of 
extracted weight vectors 
$\widehat{A}_j^{(i)}, j \in \{1, \cdots, d_i\}, i \in \{2, \cdots, k+1\}$.

\begin{lemma}	\label{lemma:w-i-of-k-nn}
Based on the system of linear equations 
presented in Eq.~\eqref{eq:soe-of-k-nn},
for $i \in \{2, \cdots, k+1\}$
and $j \in \{1, \cdots, d_i\}$,
the extracted weight vector 
$\widehat{A}_j^{(i)} = 
\left[ \widehat{w}_{j,1}^{(i)}, \cdots,  
\widehat{w}_{j, d_{i-1}}^{(i)}\right]$ is
\begin{equation}	\label{eq:expected_Ai}
\widehat{A}_{j}^{(i)} 
= \left[
\frac{
w_{j,1}^{(i)} \times \left| \sum_{v = 1}^{d_{i-2}}{w_{1, v}^{(i-1)} C_{v, 1}^{(i-2)} } \right|
}
{
\left| \sum_{v = 1}^{d_{i-1}}{w_{j, v}^{(i)} C_{v, 1}^{(i-1)} }   \right|
},
\cdots,
\frac{
w_{j,d_{i-1}}^{(i)} \times 
\left| \sum_{v = 1}^{d_{i-2}}{w_{d_{i-1}, v}^{(i-1)} C_{v, 1}^{(i-2)} } \right|
}
{
\left| \sum_{v = 1}^{d_{i-1}}{w_{j, v}^{(i)} C_{v, 1}^{(i-1)} }   \right|
}
\right],   
\end{equation}
where 
$ C_{v, 1}^{(q)} = A_{v}^{(q)} A^{(q-1)} \cdots A^{(2)} 
                \left[ A_{1, 1}^{(1)}, \cdots, A_{d_1, 1}^{(1)} \right]^\top $ .
\end{lemma}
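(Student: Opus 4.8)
The plan is to prove by induction on $i$ the clean closed form
\[
\widehat{A}^{(i)} = \left(\Lambda^{(i)}\right)^{-1} A^{(i)}\, \Lambda^{(i-1)},
\]
where I introduce the positive diagonal matrices $\Lambda^{(q)} = \mathrm{diag}\!\left(\left|C_{1,1}^{(q)}\right|, \ldots, \left|C_{d_q,1}^{(q)}\right|\right)$ for $q \geq 1$ and $\Lambda^{(0)} = I_{d_0}$. Reading off the $(j,u)$ entry of this matrix identity and recalling that $C_{j,1}^{(i)} = \sum_{v} w_{j,v}^{(i)} C_{v,1}^{(i-1)}$ and $C_{u,1}^{(i-1)} = \sum_{v} w_{u,v}^{(i-1)} C_{v,1}^{(i-2)}$ immediately reproduces Eq.~\eqref{eq:expected_Ai}. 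The base case $i=1$ is exactly the layer-$1$ recovery already derived in Eq.~\eqref{eq:k-attack-step-2-target}: there $\Lambda^{(0)} = I$ and $\left|C_{j,1}^{(1)}\right| = \left|w_{j,1}^{(1)}\right|$, so $\widehat{A}^{(1)} = (\Lambda^{(1)})^{-1} A^{(1)}$.

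For the inductive step I would first compute the product matrix $\widehat{C}^{(i-1)} = \widehat{A}^{(i-1)} \cdots \widehat{A}^{(1)}$ that the adversary actually assembles from the previously extracted layers. Substituting the induction hypothesis into each factor, the interior diagonal matrices telescope, $\Lambda^{(q-1)}(\Lambda^{(q-1)})^{-1} = I$, and
\[
\widehat{C}^{(i-1)} = (\Lambda^{(i-1)})^{-1} A^{(i-1)} \cdots A^{(1)} \,\Lambda^{(0)} = (\Lambda^{(i-1)})^{-1} C^{(i-1)}.
\]
This is the crucial structural fact: every unknown per-neuron scaling accumulated across earlier layers collapses into a single diagonal factor on the left, so the recovered $\widehat{C}^{(i-1)}$ is just the true $C^{(i-1)}$ with its $v$-th row divided by $\left|C_{v,1}^{(i-1)}\right|$.

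I would then rewrite the linear system of Eq.~\eqref{eq:soe-of-k-nn} as the single vector equation $\widehat{A}_j^{(i)}\,\widehat{C}^{(i-1)} = A_j^{(i)} C^{(i-1)} / \left|C_{j,1}^{(i)}\right|$, whose right-hand side is precisely the normalized signature row recovered in Step~2 after the sign of $\mathcal{G}_j^{(i)}$ has been guessed. Inserting $\widehat{C}^{(i-1)} = (\Lambda^{(i-1)})^{-1} C^{(i-1)}$ gives
\[
\widehat{A}_j^{(i)} (\Lambda^{(i-1)})^{-1} C^{(i-1)} = \frac{A_j^{(i)} C^{(i-1)}}{\left|C_{j,1}^{(i)}\right|},
\]
and cancelling the common right factor $C^{(i-1)}$ yields $\widehat{A}_j^{(i)} = A_j^{(i)} \Lambda^{(i-1)} / \left|C_{j,1}^{(i)}\right|$, which is the asserted formula and closes the induction. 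Note that $\Lambda^{(i-1)}$ and $\left|C_{j,1}^{(i)}\right|$ are positive, so the sign of each $w_{j,u}^{(i)}$ is preserved, matching Eq.~\eqref{eq:expected_Ai}.

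The main obstacle is justifying the cancellation of $C^{(i-1)}$ on the right. The system carries $d_0$ scalar equations in $d_{i-1}$ unknowns, so uniqueness of the solution requires $C^{(i-1)} = A^{(i-1)} \cdots A^{(1)} \in \mathbb{R}^{d_{i-1} \times d_0}$ to have full row rank $d_{i-1}$; this is exactly where the standing assumption $d_0 \geq d_{i-1}$ (together with genericity of the parameters, so that the product of weight matrices does not drop rank) enters. Given full row rank, consistency is automatic, since the explicit vector $A_j^{(i)} \Lambda^{(i-1)} / \left|C_{j,1}^{(i)}\right|$ solves the system because the right-hand side lies in the row space of $C^{(i-1)}$; and full row rank of $\widehat{C}^{(i-1)} = (\Lambda^{(i-1)})^{-1} C^{(i-1)}$, equivalent to that of $C^{(i-1)}$ because $\Lambda^{(i-1)}$ is invertible, forces this solution to be unique. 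I would therefore state the rank hypothesis explicitly and remark that it holds for generic weights whenever $d_0 \geq d_{i-1}$.
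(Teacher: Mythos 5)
Your proof is correct and follows essentially the same route as the paper's: induction on the layer index, with the key step being that the assembled matrix $\widehat{C}^{(i-1)}$ equals the true $C^{(i-1)}$ with its $v$-th row divided by $\bigl|C_{v,1}^{(i-1)}\bigr|$ (your telescoping of the diagonal factors $\Lambda^{(q)}$ is just a cleaner packaging of the paper's row-by-row computation), after which the linear system of Eq.~\eqref{eq:soe-of-k-nn} is solved. Your explicit full-row-rank hypothesis on $C^{(i-1)}$ is a welcome addition -- the paper only writes ``when $d_0 \geqslant d_{i-1}$, by solving the system, it is expected to obtain'' and leaves the uniqueness/genericity issue implicit.
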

\begin{proof}
The proof refers to Appendix~\ref{appendix:proof_lemma_1}.
\end{proof}

In Lemma~\ref{lemma:w-i-of-k-nn},
for the consistency of the mathematical symbols,
the weights $A^{(k+1)}$ 
are denoted by $[w_{1, 1}^{(k+1)}, \cdots, w_{1, d_{k}}^{(k+1)}]$
instead of $[w_{1}^{(k+1)}, \cdots, w_{d_{k}}^{(k+1)}]$.

\paragraph{\textup{\textbf{Recover All the Biases.  }}}
Since $\widehat{A}^{(i)}$ for 
$i \in \{1, \cdots, k+1\}$ have been obtained,
we can extract all the biases
by solving a system of linear equations.

Concretely, 
for the $\sum_{i=1}^{k}{d_i} + 1$ decision boundary points,
$f_{\widehat{\theta}}(x) = 0$ should hold.
Thus, we build a system of linear equations:
$f_{\widehat{\theta}}(x) = 0$ where
the expression of $f_{\widehat{\theta}}(x)$ refers to 
Eq.~\eqref{eq:k-nn-form}.
Combining with Lemma~\ref{lemma:w-i-of-k-nn},
by solving the above system, we will obtain
\begin{equation}	\label{eq:k-nn-biases}
\left\{
\begin{array}{l}
\widehat{b}^{(i)} =
\left[
\frac{
b_{1}^{(i)}
}
{
\left| \sum_{v = 1}^{d_{i-1}}{w_{1, v}^{(i)} C_{v, 1}^{(i-1)} }   \right|
},
\cdots,
\frac{
b_{d_{i}}^{(i)} 
}
{
\left| \sum_{v = 1}^{d_{i-1}}{w_{d_{i}, v}^{(i)} C_{v, 1}^{(i-1)} }   \right|
}
\right], i \in \{1, \cdots, k\}\\

\widehat{b}^{(k+1)} = \frac{b^{(k+1)}}
{
\left| \sum_{v = 1}^{d_{k}}{w_{v}^{(k+1)} C_{v, 1}^{(k)} }   \right|
}  .\\
\end{array}
\right. 
\end{equation}

Based on the extracted neural network parameters
(see Eq.~\eqref{eq:k-attack-step-2-target},
Eq.~\eqref{eq:expected_Ai},
and Eq.~\eqref{eq:k-nn-biases}),
the model signature of the extracted 
model $f_{\widehat{\theta}}$ is 
\begin{equation}
\mathcal{S}_{\widehat{\theta}} = 
\left \{
(\widehat{\varGamma}_{\mathcal{P}}, \widehat{B}_{\mathcal{P}}) = 
\left(
\frac{\varGamma_{\mathcal{P}}}{
\left| \sum_{v = 1}^{d_{k}}{w_{v}^{(k+1)} C_{v, 1}^{(k)} }   \right|
},
\frac{B_{\mathcal{P}}}{
\left| \sum_{v = 1}^{d_{k}}{w_{v}^{(k+1)} C_{v, 1}^{(k)} }   \right|
}
\right)
{\rm for} \,\, {\rm all} \,\, {\rm the } \,\, \mathcal{P}{\rm's}
\right \} .	\nonumber
\end{equation}

Consider the $j$-th neuron $\eta$ in layer $i$.
For an input $x \in \mathcal{X}$,
denote by $h(\eta; x)$
the output of the neuron of the victim model.
Based on the extracted neural network parameters
(see Eq.~\eqref{eq:k-attack-step-2-target}, 
Eq.~\eqref{eq:expected_Ai},
and Eq.~\eqref{eq:k-nn-biases}),
the output of the neuron of the extracted model
is $\frac{h(\eta; x)}{ \left| 
\sum_{v=1}^{d_{i-1}}{w_{j,v}^{(i)} C_{v, 1}^{(i-1)}} \right|}$.
At the same time, all the weights $w_{?,j}^{(i+1)}$ in layer $i+1$
are increased by a factor of 
$\left|  \sum_{v=1}^{d_{i-1}}{w_{j,v}^{(i)} C_{v, 1}^{(i-1)}} \right|$.
Thus, for the victim model and extracted model,
the $j$-th neuron in layer $i$ has the same influence
on all the neurons in layer $i+1$. 
As a result, for any $x \in \mathcal{X}$,
the model activation pattern of the victim model
is the same as that of the extracted model.
Combining with $\mathcal{S}_{\widehat{\theta}}$, 
for all $x \in \mathcal{X}$, we have
\begin{equation}
f_{\widehat{\theta}} (x) = 
\frac{1}{
\left| \sum_{v = 1}^{d_{k}}{w_{v}^{(k+1)} C_{v, 1}^{(k)} }   \right|
}  \times f_{\theta}(x) .
\end{equation}

In Appendix~\ref {appendix:attack_on_1_deep_nn}, 
we apply the extraction attack on $1$-deep neural networks 
and directly present the extracted model, 
which helps further understand our attack.

\begin{remark}	 \label{remark:change_MAPs}
Except for the $n+1$ model activation
patterns as shown in Eq.~\eqref{eq:MAPs_for_layer_1}
and Eq.~\eqref{eq:MAPs_for_layer_i},
the adversary could choose a new set of $n+1$
model activation patterns.
The reason is as follows.
Consider the recovery of the weight vector of
the $j$-th neuron in layer $i$,
and look at Fig.~\ref{fig:overview_of_attack} again.
Our attack only requires that:
(1) in layer $i$, only the $j$-th neuron is active;
(2) in layer $i-1$, all the $d_{i-1}$ neurons are active.
The neuron states in other layers
do not affect the attack.
Thus, there are more options for 
the $n+1$ model activation patterns,
and the rationale does not change.
\end{remark}

\paragraph{Discussion on The Computation Complexity. }
Once $n+1$ decision boundary points and 
$k$ sign guesses are selected,
to obtain an extracted model,
we just need to solve $n+2-d_1$ systems of linear equations.
However, since the model activation pattern
of a decision boundary point is unknown, 
we have to traverse all the possible 
combinations of $n+1$ decision boundary points
(see Step 3 in Section~\ref{sec:overview}),
which is the bottleneck of the total computation complexity.
The complete analysis of the attack complexity is presented in
Appendix~\ref{appendix:attack_complexity}.
\section{Instantiating the Extraction Attack in Practice}
\label{sec:refine_attacks}

Recall that, the complete extraction attack contains $5$ steps
introduced in Section~\ref{sec:overview}.
To obtain a functionally equivalent model,
the adversary also needs three auxiliary techniques: 
\emph{finding decision boundary points} (related to Steps 1 and 2), 
\emph{filtering duplicate affine transformations} (related to Step 2), 
and \emph{filtering functionally inequivalent models} (related to Step 5).

The idealized extraction attack 
introduced in Section~\ref{sec:DL_attack} 
relies on decision boundary points $x$ 
that make $f_{\theta} (x) = 0$ strictly hold.
This section will propose a binary searching method
to find decision boundary points under the hard-label setting.
Under finite precision,
it is hard to find decision boundary points $x$ 
that make $f_{\theta} (x) = 0$ strictly hold.
Therefore, the proposed method returns
input points $x$ close to the decision hyperplane
as decision boundary points.
As a result, the remaining two techniques need to
consider the influence of finite precision.  
This ensures our model extraction attacks work in practice, 
for producing a $(\varepsilon, 0)$-functionally equivalent model.

%

\subsection{Finding Decision Boundary Points}
\label{subsec:find_DBP}
Let us see how to find decision boundary points 
under the hard-label setting.
Fig.~\ref{fig:boundary_points} shows a schematic diagram
in a 2-dimensional input space.

\begin{figure}[!htb]
\centering
\includegraphics[width=0.3\textwidth]{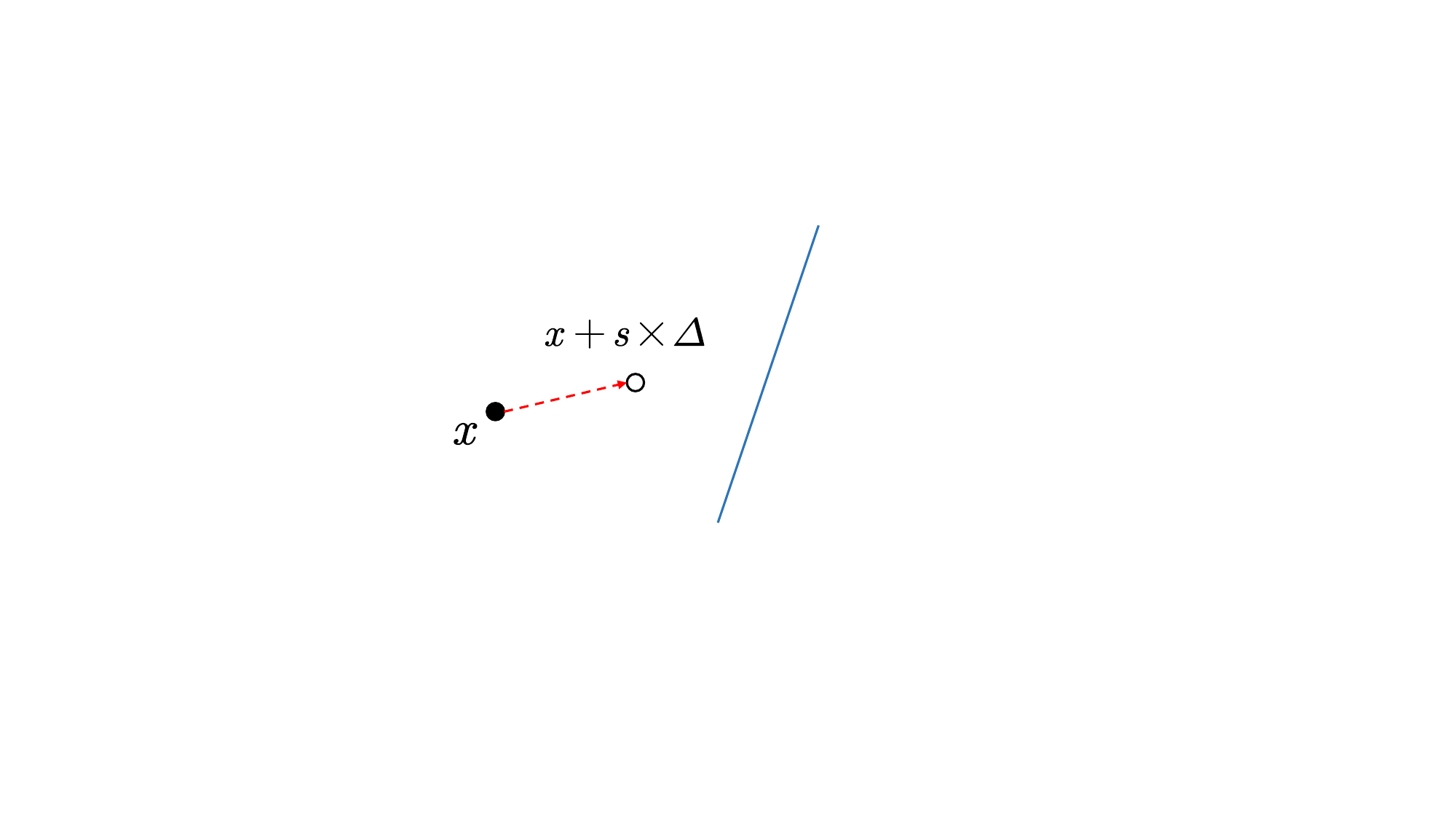}
\caption{
A schematic diagram of finding decision boundary points.
The blue solid line stands for the decision hyperplane
composed of decision boundary points.
The red dashed line stands for a direction vector 
$\varDelta \in \mathbb{R}^{d_0}$.
The starting point $x \in \mathbb{R}^{d_0}$ 
(i.e., the solid black circle) 
moves along the direction $\varDelta$,
and arrives at $x + s \times \varDelta$ 
(i.e., the hollow black circle)
where $s \in \mathbb{R}$ is the moving stride.
}
\label{fig:boundary_points}
\end{figure}

We first randomly pick a starting point $x \in \mathbb{R}^{d_0}$
and non-zero direction vector $\varDelta \in \mathbb{R}^{d_0}$.
Then let the starting point move along the direction $\varDelta$
or the opposite direction $- \varDelta$.
It is expected that the starting point 
will eventually cross the decision hyperplane in one direction,
as long as $\varDelta$ and $-\varDelta$ are not parallel
to the decision hyperplane.

Denote by $s \in \mathbb{R}$ 
the moving stride of the starting point,
which means that the starting point arrives at $x + s \times \varDelta$.
After querying the Oracle with $x$ and $x + s \times \varDelta$, 
if $z\left( f_{\theta} (x) \right) \ne 
z\left( f_{\theta} (x + s \times \varDelta) \right)$
(i.e., the two labels are different),
we know that the starting point has crossed the decision hyperplane 
when the moving stride is $s$.
Now, the core of finding decision boundary points is 
to determine a suitable moving stride $s$, 
such that the starting point reaches the decision hyperplane, 
i.e., $f_{\theta} (x + s \times \varDelta) = 0$ holds.

This task is done by \emph{binary search}.
Concretely, randomly choose two different 
moving strides $s_{\rm slow}$ and $s_{\rm fast}$ at first, such that 
\begin{equation}	\label{eq:find_boundary_points}
\begin{aligned}
z\left( f_{\theta} (x + s_{\rm slow} \times \varDelta) \right) &=
z\left( f_{\theta} (x) \right) ,\\
z\left( f_{\theta} (x + s_{\rm slow} \times \varDelta) \right) &\ne
z\left( f_{\theta} (x + s_{\rm fast} \times \varDelta) \right) .
\end{aligned}
\end{equation}
Then,
without changing the conditions presented 
in Eq.~\eqref{eq:find_boundary_points},
we dynamically adjust $s_{\rm slow}$ and $s_{\rm fast}$
until their absolute difference is close to $0$,
i.e., $\left| s_{\rm slow} - s_{\rm fast} \right| < \epsilon$
where $\epsilon$ is a precision defined by the adversary.
Finally, return $x + s_{\rm slow} \times \varDelta$ 
as a decision boundary point.

Since the precision $\epsilon$ is finite,
$x + s_{\rm slow} \times \varDelta$ 
is not strictly at the decision boundary,
which will inevitably introduce minor errors (equivalent to noises)
into the extracted model.
If $\epsilon$ decreases, 
then $x + s_{\rm slow} \times \varDelta$ will be closer to
the decision boundary,
which is helpful to the model extraction attack,
refers to the experiment results 
in Section~\ref{sec:experiments}.

\subsection{Filtering Duplicate Affine Transformations}
\label{subsec:filter_duplicate_AT}

For a $k$-deep neural network $f_{\theta}$ 
consisting of $n = \sum_{i=1}^{k}{d_i}$ neurons,
the idealized extraction attack 
exploits special $n+1$ model activation patterns.

To ensure that the required $n+1$ model activation patterns
occur with a probability as high as possible,
in Step 1 introduced in Section~\ref{sec:overview},
we collect $M$ decision boundary points where $M \gg n+1$,
e.g., $M = c_n 2^n$ and $c_n$ is a small factor.
As a result, there are many collected decision boundary points 
with duplicate model activation patterns. 
Therefore, in Step 2, after recovering the parameter tuple
$\left( \varGamma_{\mathcal{P}}, B_{\mathcal{P}} \right)$
(i.e., the affine transformation)
corresponding to each decision boundary point,
we need to filter the decision boundary points 
with duplicate affine transformations,
since their model activation patterns should be the same. 
When filtering duplicate affine transformations, 
we consider two possible cases.

\paragraph{\textup{\textbf{Filtering Correctly Recovered Affine Transformations. }}}
In the first case, 
assume that two affine transformations are both correctly recovered.

However, recovering affine transformations 
(i.e., 0-deep neural network extraction attack)
relies on finding decision boundary points,
which introduces minor errors.
This is equivalent to adding noises to
the recovered affine transformations, i.e.,
the tuples $\left( \varGamma_{\mathcal{P}}, 
B_{\mathcal{P}} \right)$.
To check whether two noisy affine transformations are the same,
we adopt the checking rule below.

\paragraph{Comparing two vectors. }
Consider two vectors with the same dimension,
e.g., $V^{1} \in \mathbb{R}^{d}, V^{2} \in \mathbb{R}^{d}$.
Set a small threshold $\varphi$.
If the following $d$ inequations hold simultaneously
\begin{equation}	\label{eq:compare_rule}
\left| V_j^{1} - V_j^{2} \right| < \varphi, \,\, j \in \{1, \cdots d\}
\end{equation}
where $V_j^{1}$ and $V_j^{2}$ are, respectively, 
the $j$-th element of $V^{1}$ and $V^{2}$,
the two vectors are considered to be the same.


\paragraph{\textup{\textbf{
Filtering Wrongly Recovered Affine Transformations. }}}

In the second case,
assume that
one affine transformation is correctly recovered
and another one is partially recovered.

For the extraction attack on $k$-deep neural networks,
when recovering the affine transformation 
corresponding to an input by the 
$0$-deep neural network extraction attack 
(see Section~\ref{subsec:0_nn_attack}),
the process of binary search 
should not change the model activation pattern.
Otherwise, the affine transformation may be wrongly recovered.
Recall that, in the $0$-deep neural network extraction attack, 
the $d_0$ elements of $\varGamma_{\mathcal{P}}$ 
are recovered one by one independently.
Thus, the wrong recovery of one element of 
$\varGamma_{\mathcal{P}}$ 
does not influence the recovery of other elements.

As a result, we have to consider the case that 
one transformation is partially recovered.
In this case, the filtering method is as follows.
Consider two vectors $V^{1} \in \mathbb{R}^{d}$
and $V^{2} \in \mathbb{R}^{d}$.
If $\left| V_j^{1} - V_j^{2} \right| < \varphi$ 
holds for at least $(d - d_{\varphi})$ $j$'s where $j \in \{1, \cdots, d\}$
and $d_{\varphi} \in \mathbb{N}$ is a threshold,
the two vectors are considered to be the same.
Suppose that the occurrence frequencies of 
$V^{1}$ and $V^{2}$ are $o_1$ and $o_2$ respectively,
we regard $V^{1}$ as the correctly recovered affine transformation 
if $o_1 \gg o_2$, and vice versa.

\subsection{Filtering Functionally Inequivalent Extracted Models}
\label{subsec:filter_wrong_models}

Consider $k$-deep neural networks consisting of 
$n = \sum_{i=1}^{k}{d_i}$ neurons.
As introduced in Section~\ref{sec:overview},
each time we randomly choose $n+1$ out of $N$ collected decision
boundary points to generate an extracted model.
Moreover, 
according to Section~\ref{subsec:k_deep_nn_attack},
in the extraction attack,
we need to guess $k$ signs, i.e., 
the sign of $\mathcal{G}_j^{(i)}, i \in \{1, \cdots, k\}$.

When the model activation patterns of the selected 
$n+1$ decision boundary points 
are not those required in the extraction attack,
or at least one of the $k$ sign guesses is wrong,
the resulting extracted model $f_{\widehat{\theta}}$ 
is not a functionally equivalent model of 
the victim model $f_{\theta}$.
Thus, we will get many functionally inequivalent extracted models.

Besides, due to the minor errors introduced 
by the finite precision used in finding decision boundary points,
the parameters of the extracted model may be slightly different from
the theoretical values (see Eq.~\eqref{eq:k-attack-step-2-target}, 
Eq.~\eqref{eq:expected_Ai}, and Eq.~\eqref{eq:k-nn-biases}).
This subsection introduces three methods to 
filter functionally inequivalent extracted models, 
one of which considers the negative influence of finite precision together.


\paragraph{\textup{\textbf{Filtering
by the Normalized Model Signature.  }}}
Before introducing the filtering method,
we discuss how many possible model activation 
patterns there are at most for a $k$-deep neural network.
Lemma~\ref{lem:num_of_MAPs} answers this question.

\begin{lemma}	\label{lem:num_of_MAPs}
For a $k$-deep neural network consisting of 
$n = \sum_{i=1}^{k}{d_i}$ neurons,
the upper bound of the number of possible model activation patterns is
\begin{equation}
H = 
\left( \prod_{i=1}^{k}{(2^{d_i} - 1)} \right)  +  
\sum_{i=2}^{k}{\left(  \prod_{j=1}^{i-1}{(2^{d_j} - 1)} \right)} ,
\end{equation}
where $d_i$ is the number of neurons in layer $i$.
\end{lemma}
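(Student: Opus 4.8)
The plan is to bound the number of \emph{achievable} model activation patterns (those realized by at least one input $x$) by partitioning them according to the index of the first hidden layer that is \emph{entirely inactive}. Writing $p_i = 2^{d_i}-1$ for the number of nonzero states a single layer of $d_i$ neurons can exhibit, I expect the two summands of $H$ to arise, respectively, from the patterns in which no layer is entirely inactive and from those in which some layer is. The whole point of the argument is that the second family is far smaller than the naive $\prod_{i}(2^{d_i})$ suggests, because an entirely-off layer freezes everything that follows it.

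The engine of the proof is the following structural fact, which I would establish by simply unrolling the forward pass. If, on input $x$, all $d_i$ neurons of layer $i$ are inactive, then $I_{\mathcal{P}}^{(i)}$ is the zero matrix, the post-activation output of layer $i$ is the zero vector, and hence the vector fed into layer $i+1$ is the constant $0$ regardless of $x$. The pre-activations of layer $i+1$ are then $A^{(i+1)}\cdot 0 + b^{(i+1)} = b^{(i+1)}$, and propagating this forward shows that the states $\mathcal{P}^{(i+1)},\dots,\mathcal{P}^{(k)}$ are \emph{determined by the parameters alone}, independent of $x$ (and $\varGamma_{\mathcal{P}}=0$). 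This is precisely what collapses the count: once a layer switches entirely off, all downstream layers are frozen into a single pattern.

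With this in hand the counting is routine. First, for the patterns in which every hidden layer keeps at least one active neuron, each layer $i$ contributes one of the $2^{d_i}-1 = p_i$ nonzero binary strings, so there are at most $\prod_{i=1}^{k} p_i$ such tuples. Second, among the patterns possessing an entirely-inactive layer, let layer $i$ be the first one; then layers $1,\dots,i-1$ each carry a nonzero state (at most $\prod_{j=1}^{i-1} p_j$ choices), layer $i$ is the fixed all-zero state, and by the structural fact layers $i+1,\dots,k$ are forced into a single pattern, so this class contributes at most $\prod_{j=1}^{i-1} p_j$. Summing over the admissible first-inactive indices $i \in \{2,\dots,k\}$ together with the no-inactive case yields $\prod_{i=1}^{k} p_i + \sum_{i=2}^{k}\prod_{j=1}^{i-1}p_j = H$; one can check this equals the clean form $\sum_{m=1}^{k}\prod_{j=1}^{m}p_j$, which is a useful sanity check on the summation.

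The step I expect to be the real obstacle is pinning down the index range of $i$ so as to land on $H$ exactly rather than on a bound larger by one. The sum in $H$ starts at $i=2$, which amounts to treating the first hidden layer as never being entirely inactive; the degenerate alternative—layer $1$ entirely off, making $f_{\theta}$ a global constant with $\varGamma_{\mathcal{P}}=0$—carries no decision-boundary information and is the single case one must argue away, consistent with the convention discussed earlier of discarding always-inactive (redundant) configurations. Verifying the downstream-determination fact rigorously and disposing of this one boundary case cleanly are the only genuine work; the per-layer estimate $p_i = 2^{d_i}-1$ and the final summation are immediate.
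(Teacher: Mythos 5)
Your proposal is correct and follows essentially the same route as the paper's proof: partition the activation patterns according to whether some hidden layer is entirely inactive (and, if so, which is the first such layer), observe that an all-inactive layer freezes every downstream layer into a single parameter-determined pattern, and count $2^{d_j}-1$ nonzero states for each preceding layer. Your explicit handling of the $i=1$ boundary case and the unrolled forward-pass justification are more careful than the paper's brief argument, but the decomposition and the key structural fact are identical.
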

\begin{proof}
If all the $d_i$ neurons in layer $i$ are inactive,
i.e., the outputs of these neurons are $0$,
then the neuron states of all the $\sum_{j=i+1}^{k}{d_j}$ neurons
in the last $k - i$ layers are deterministic. 
In this case, the number of possible model activation patterns
is decided by the first $i-1$ layers, i.e.,
the maximum is $ \prod_{j=1}^{i-1}{(2^{d_j} - 1)}$.
If there is at least one active neuron in each layer,
then there are at most $\prod_{i=1}^{k}{(2^{d_i} - 1)}$ possible
model activation patterns.
\end{proof}

After all the weights $\widehat{A}^{(i)}$ 
and biases $\widehat{b}^{(i)}, i \in \{1, \cdots, k+1\}$ are obtained,
we assume that 
all the $H$ model activation patterns are possible,
and compute the resulting normalized model signature 
$\mathcal{S}_{\widehat{\theta}}^{\mathcal{N}}$. 
Denote by $\mathcal{S}_{\theta}^{\mathcal{N}}$
the normalized model signature recovered in Step 2 
(see Section~\ref{sec:overview}).
If $\mathcal{S}_{\theta}^{\mathcal{N}}$ is not a subset of
$\mathcal{S}_{\widehat{\theta}}^{\mathcal{N}}$,
we regard $f_{\widehat{\theta}}$ as a
functionally inequivalent model.


Due to the minor errors caused by finite precision,
i.e.,  the slight difference between the extracted parameters 
$\widehat{\theta}$ and the theoretical values 
(see Eq.~\eqref{eq:k-attack-step-2-target}, 
Eq.~\eqref{eq:expected_Ai}, and Eq.~\eqref{eq:k-nn-biases}), 
when checking whether a tuple 
$\left( \varGamma_{\mathcal{P}}, 
B_{\mathcal{P}} \right) 
\in \mathcal{S}_{\theta}^{\mathcal{N}}$ 
is equal to a tuple
$\left( \widehat{\varGamma}_{\mathcal{P}}, 
\widehat{B}_{\mathcal{P}} \right) 
\in \mathcal{S}_{\widehat{\theta}}^{\mathcal{N}}$ or not,
we adopt the checking rule presented in
Section~\ref{subsec:filter_duplicate_AT},
refers to Eq.~\eqref{eq:compare_rule}.


Besides, 
the filtering method in Section~\ref{subsec:filter_duplicate_AT}
does not ensure that all the wrongly recovered 
affine transformations are filtered. 
To avoid the functionally equivalent model being filtered,
we adopt a flexible method.

Recall that, in Step 1, 
we collect a sufficient number of decision boundary points.
For each tuple $\left( \varGamma_{\mathcal{P}}, 
B_{\mathcal{P}} \right) 
\in \mathcal{S}_{\theta}^{\mathcal{N}}$,
denote by $m_{\mathcal{P}}$ the frequency that
the tuple occurs in the collected decision boundary points.
Suppose that the number of $m_{\mathcal{P}}$ 
where $m_{\mathcal{P}} > 1$ is $N_{\rm valid}$.
Then when at least $0.95 \times N_{\rm valid}$
tuples $\left( \varGamma_{\mathcal{P}}, 
B_{\mathcal{P}} \right) 
\in \mathcal{S}_{\theta}^{\mathcal{N}}$ are
in the set $\mathcal{S}_{\widehat{\theta}}^{\mathcal{N}}$, 
the extracted model $f_{\widehat{\theta}}$ 
is regarded as a candidate of the functionally equivalent model.
Here, We call the ratio $0.95 \times \frac{N_{\rm valid}}{
\left|  \mathcal{S}_{\theta}^{\mathcal{N}} \right|}$
the adaptive threshold.

\paragraph{\textup{\textbf{Filtering by Weight Signs.  }}}
After $\widehat{A}^{(i)}$, 
$\widehat{b}^{(i)}$ for $i \in \{1, \cdots, k+1\}$ are obtained,
we compute the matrices $\widehat{\mathcal{G}}^{(i)}$
and check whether the $k$ signs, i.e., the sign 
of $\widehat{\mathcal{G}}_j^{(i)}, i \in \{1, \cdots, k\}$
are consistent with the $k$ guesses.
If at least one sign is not consistent with the guess, 
the extracted model is not the functionally equivalent model. 

Interestingly, except for handling wrong sign guesses, 
this method also shows high filtering effectiveness 
when the model activation patterns of 
the selected $n + 1$ decision boundary points 
are not those required by our extraction attacks. 
This is not strange, 
since our extraction attack is designed 
for a specific set of model activation patterns.   
For wrong model activation patterns, 
whether the sign of 
$\widehat{\mathcal{G}}_j^{(i)}, i \in \{1, \cdots, k\}$ is $1$ or $-1$ 
is a random event.

\paragraph{\textup{\textbf{Filtering by Prediction Matching Ratio. }}}
The above two filtering methods are effective,
but we find that some functionally inequivalent 
models still escape from the filtering.
Therefore, the third method is designed to
perform the last filtering on extracted models surviving from
the above two filtering methods.
This method is based on the \emph{prediction matching ratio}.

\paragraph{Prediction Matching Ratio. }
Randomly generate $N_1$ inputs,
query the extracted model $f_{\widehat{\theta}}$ 
and the victim model $f_{\theta}$.
Suppose that the two models return the same hard-label for 
$N_2$ out of $N_1$ inputs.
The ratio $\frac{N_2}{N_1}$ is called the prediction matching ratio.

According to Definition~\ref{def:EFEE} and 
Definition~\ref{def:equivalence-in-practice},
for a functionally equivalent model, 
the prediction matching ratio should be high, 
or even close to $100\%$.
Note that many random inputs $x$ 
and corresponding hard-label $z\left( f_{\theta} (x) \right)$
are collected during the attack process 
(see Steps 1 and 2 in Section~\ref{sec:overview}).
Thus, we can exploit these inputs.


\section{Experiments}
\label{sec:experiments}

Our model extraction attacks are evaluated on 
both untrained and trained neural networks.
Concretely, we first perform experiments on 
untrained neural networks with diverse architectures 
and randomly generated parameters. 
Then, based on two typical benchmarking image datasets 
(i.e., MNIST, CIFAR10)
in visual deep learning, 
we train a series of neural networks as classifiers 
and evaluate the model extraction attacks 
on these trained neural networks.

For convenience, denote by `$d_0$-$d_1$-$\cdots$-$d_{k+1}$'
the victim model, where $d_i$ is the dimension of each layer.
For example, the symbol 1000-1 stands for a $0$-deep neural network
with an input dimension of $1000$ and an output dimension of $1$.

\paragraph{Partial Universal Experiment Settings. }
Some settings are used in all the following experiments.
For $k$-deep neural network extraction attacks, in Step 1,
we randomly generate $8 \times 2^n$ pairs of 
starting point and moving direction, 
where $n = \sum_{i=1}^{k}{d_i}$ is the number of neurons.
The prediction matching ratio is estimated
over $10^6$ random inputs.

\subsection{Computing $(\varepsilon, 0)$-Functional Equivalence}

To quantify the degree
to which a model extraction attack has succeeded,
the method (i.e., error bounds 
propagation~\cite{DBLP:conf/crypto/CarliniJM20}) 
proposed by Carlini et al. is adopted to 
compute $(\varepsilon, 0)$-functional equivalence.

\paragraph{\textup{\textbf{Error bounds propagation. }}}
To compute $(\varepsilon, 0)$-functional equivalence of 
the extracted neural network $f_{\widehat{\theta}}$,
one just needs to compare the extracted parameters 
(weights $\widehat{A}^{(i)}$ and biases $\widehat{b}^{(i)}$) 
to the real parameters 
(weights $A^{(i)}$ and biases $b^{(i)}$)
and analytically derive an upper bound on the error 
when performing inference~\cite{DBLP:conf/crypto/CarliniJM20}. 

Before comparing the neural network parameters,
one must `align' them~\cite{DBLP:conf/crypto/CarliniJM20}. 
This involves two operations:
(1) adjusting the order of the neurons in the network,
i.e., the order of the rows or columns of $A^{(i)}$ and $b^{(i)}$,
(2) adjusting the values of $A^{(i)}$ and $b^{(i)}$ to 
the theoretical one (see Eq.~\eqref{eq:k-attack-step-2-target},
Eq.~\eqref{eq:expected_Ai}, and Eq.~\eqref{eq:k-nn-biases})
obtained by the idealized model extraction attacks.
This gives an aligned $\widetilde{A}^{(i)}$ and $\widetilde{b}^{(i)}$
from which one can analytically derive upper bounds on the error.
Other details (e.g., propagating error bounds layer-by-layer) are the same
as that introduced in~\cite{DBLP:conf/crypto/CarliniJM20},
and not introduced again in this paper.

\subsection{Experiments on Untrained Neural Networks}
Table~\ref{tab:attack_on_1_deep_nn} summarizes 
the experimental results on different untrained neural networks
which demonstrates the effectiveness of our model extraction attacks.

\begin{table}[!htbp]
\centering
\renewcommand\arraystretch{1.0}
\caption{Experiment results on untrained $k$-deep neural networks.}
\label{tab:attack_on_1_deep_nn}
\begin{tabular}{c|c|c|c|c|c|c}
\hline
Architecture	&	Parameters	&	$\epsilon$ 		&	
PMR		&	Queries		&	$(\varepsilon, 0)$	&	
max$|\theta - \widehat{\theta}|$	\\
\hline
512-2-1		&	1029			&	$10^{-12}$	&	
$100\%$		&	$2^{19.35}$	&	$2^{-12.21}$	&	
$2^{-16.88}$	\\
			&				&	$10^{-14}$ 	& 
$100\%$		&	$2^{19.59}$	&	$2^{-19.84}$	&	
$2^{-24.62}$	\\
\hline
2048-4-1		&	8201			&	$10^{-12}$	&	
$99.98\%$		&	$2^{23.32}$	&	$2^{-3.77}$	&	
$2^{-10.44}$	\\
			&				&	$10^{-14}$ 	& 
$100\%$		&	$2^{23.51}$	&	$2^{-13.70}$	&	
$2^{-17.75}$	\\
\hline
25120-4-1		&	100489		&	$10^{-14}$	&	
$99.98\%$		&	$2^{26.42}$	&	$2^{-2.99}$	&	
$2^{-14.67}$	\\
			&				&	$10^{-16}$ 	& 
$100\%$		&	$2^{26.67}$	&	$2^{-13.01}$	&	
$2^{-23.19}$	\\
\hline
50240-2-1		&	100485		&	$10^{-14}$	&	
$99.99\%$		&	$2^{25.85}$	&	$2^{-7.20}$	&	
$2^{-15.58}$	\\
			&				&	$10^{-16}$ 	& 
$100\%$		&	$2^{26.31}$	&	$2^{-14.44}$	&	
$2^{-22.67}$	\\
\hline
32-2-2-1		&	75			&	$10^{-12}$	&	
$100\%$		&	$2^{17.32}$	&	$2^{-10.99}$	&	
$2^{-14.78}$	\\
			&				&	$10^{-14}$	&	
$100\%$		&	$2^{17.56}$	&	$2^{-18.21}$	&	
$2^{-20.61}$	\\
\hline
512-2-2-1		&	1035			&	$10^{-12}$	&	
$99.99\%$		&	$2^{21.39}$	&	$2^{-10.34}$	&	
$2^{-14.01}$	\\
			&				&	$10^{-14}$	&	
$100\%$		&	$2^{21.59}$	&	$2^{-14.17}$	&	
$2^{-17.29}$	\\
\hline
1024-2-2-1	&	2059			&	$10^{-12}$	&	
$99.99\%$		&	$2^{22.38}$	&	$2^{-6.10}$	&	
$2^{-13.77}$	\\
			&				&	$10^{-14}$	&	
$100\%$		&	$2^{22.49}$	&	$2^{-14.16}$	&	
$2^{-20.38}$	\\
\hline
\multicolumn{6}{l}{
$\epsilon$: the precision used to find decision boundary points. 
}		\\
\multicolumn{6}{l}{
max$|\theta - \widehat{\theta}|$: the maximum extraction error
of model parameters. 
}		\\
\multicolumn{6}{l}{
PMR: prediction matching ratio.
}		\\
\end{tabular}
\end{table}

According to Appendix~\ref{appendix:attack_complexity},
the computation complexity of our model extraction attack is
about $\mathcal{O}\left( n \times 2^{n^2+n+k} \right)$,
where $n$ is the number of neurons.
Thus, we limit the number of neurons, 
which does not influence the verification of 
our model extraction attack.
Note that the number of parameters is not limited.
All the attacks can be finished within several hours on a single core. 

The results in Table~\ref{tab:attack_on_1_deep_nn}
also support our argument in Remark~\ref{remark:change_MAPs}.
For the $2$-deep neural networks (e.g., 32-2-2-1),
when recovering the weights in layer $1$,
we require that only one neuron in layer $2$ is active,
instead of all the $2$ neurons being active.
Our extraction attacks also achieve good performance.

\paragraph{\textup{\textbf{The influence of the precision $\epsilon$.  }}}

A smaller $\epsilon$ will make the returned point 
$x + s_{\rm slow} \times \varDelta$ (see Section~\ref{subsec:find_DBP}) 
closer to the decision boundary,
which helps reduce the extraction error of affine transformations.
As a result, 
the model extraction attack is expected to perform better.
For example, for the 1-deep neural network 2048-4-1, 
when $\epsilon$ decreases from $10^{-12}$ to $10^{-14}$,
the value $\varepsilon$ (respectively, max$| \theta - \widehat{\theta} |$) 
decreases from $2^{-3.77}$ to $2^{-13.70}$ 
(respectively, from $2^{-10.44}$ to $2^{-17.75}$), 
which is a significant improvement.

At the same time,
using a smaller precision $\epsilon$ 
does not increase the attack complexity significantly.
According to Appendix~\ref{appendix:attack_complexity},
the query complexity is about
$\mathcal{O} \left( d_0 \times 2^n 
\times {\rm log}_2^{ \frac{1}{\epsilon} } \right)$.
Thus, decreasing $\epsilon$ has little influence on the query complexity.
Look at the neural network 2048-4-1 again.
When $\epsilon$ decreases from $10^{-12}$ to $10^{-14}$,
the number of queries only increases from $2^{23.32}$ to $2^{23.51}$.
Besides, when $n$ (i.e., the number of neurons) is large, 
$\epsilon$ almost does not 
influence the computation complexity,
since $\epsilon$ only influences Steps 1 and 2 
(see Section~\ref{sec:overview}),
while the computation complexity 
is mainly determined by other steps
(refer to Appendix~\ref{appendix:attack_complexity}).
When $n$ is small,
the practical runtime is determined by the query complexity,
then decreasing $\epsilon$ also has little influence
on the runtime.

Choosing an appropriate $\epsilon$ is simple.
In our experiments, 
we find that a smaller $\epsilon$ should be used,
when the prediction matching ratio 
estimated over $10^6$ random inputs is not $100\%$,
and the gap (e.g., $0.02\%$, see the third or fifth row) 
is not negligible.

\subsection{Experiments on Trained Neural Networks}


%

\paragraph{\textup{\textbf{The MNIST and CIFAR10 Dataset. }}}
MNIST (respectively, CIFAR10) is one typical 
benchmarking dataset used in visual deep learning.
It contains ten-class handwriting number 
gray images~\cite{DBLP:journals/pieee/LeCunBBH98}
(resp., real object images 
in a realistic environment~\cite{DBLP:journals/pami/TorralbaFF08}). 
Each of the ten classes, i.e., `0', `1', `2', `3', `4', `5', `6', `7', `8', and `9'
(resp., airplane, automobile, bird, cat, deer, dog, frog, horse, ship, truck),
contains $28 \times 28$ pixel gray images 
(resp., $32 \times 32$ pixel RGB images), 
totaling $60000$ (resp., $50000$) training and 
$10000$ (resp. $10000$) testing images.

\paragraph{\textup{\textbf{Neural Network Training Pipelines. }}}
When classifying different classes of objects,
the decision boundary of trained neural networks will be different.
To fully verify our model extraction attack,
for MNIST (respectively, CIFAR10), 
we divide the ten classes into five groups
and build a binary classification neural network for each group.
All the neural networks share the same architecture $d_0$-2-1,
where $d_0 = 28 \times 28$ for MNIST 
(respectively, $32 \times 32 \times 3$ for CIFAR10).
On the MNIST and CIFAR10 datasets,
we perform a standard rescaling of the pixel values from
$0 \cdots 255$ to $0 \cdots 1$.
For the model training,
we choose typical settings 
(the loss is the cross-entropy loss; 
the optimizer is standard stochastic gradient descent; 
batch size $128$).
The first four columns of 
Table~\ref{tab:attack_on_trained_nn} summarize a detailed 
description of the neural networks to be attacked in this section.

\paragraph{\textup{\textbf{Experiment Results.  }}}
The last four columns of 
Table~\ref{tab:attack_on_trained_nn} 
summarize the experiment results.
Our extraction attack still achieves good performance when
an appropriate precision $\epsilon$ is used,
which further verifies its effectiveness.

\begin{table}[!t]
\centering
\renewcommand\arraystretch{1.0}
\caption{Experiment results on neural networks 
trained on MNIST or CIFAR10. }
\label{tab:attack_on_trained_nn}
\begin{tabular}{c|c|c|c|c|c|c|c}
\hline
task		&	architecture	&	accuracy		&	parameters	&
$\epsilon$		&	Queries		&	$(\varepsilon, 0)$	&	
max$|\theta - \widehat{\theta}|$	\\
\hline
`0' vs `1'		&	784-2-1		&	0.9035		&	1573			&
$10^{-12}$	&	$2^{20.11}$	&	$2^{-16.39}$	&	$2^{-17.85}$	\\
			&				&				&				&
$10^{-14}$	&	$2^{20.32}$	&	$2^{-20.56}$	&	$2^{-22.81}$	\\
\hline
`2' vs `3'		&	784-2-1		&	0.8497		&	1573			&
$10^{-12}$	&	$2^{20.11}$	&	$2^{-7.00}$	&	$2^{-7.80}$	\\
			&				&				&				&
$10^{-14}$	&	$2^{20.32}$	&	$2^{-14.32}$	&	$2^{-15.06}$	\\
\hline
`4' vs `5'		&	784-2-1		&	0.8570		&	1573			&
$10^{-12}$	&	$2^{20.02}$	&	$2^{-8.47}$	&	$2^{-8.82}$ 	\\
			&				&				&				&
$10^{-14}$	&	$2^{20.32}$	&	$2^{-15.62}$	&	$2^{-15.81}$ 	\\
\hline
`6' vs `7'		&	784-2-1		&	0.9290		&	1573			&
$10^{-12}$	&	$2^{20.11}$	&	$2^{-7.02}$	&	$2^{-7.93}$ 	\\
			&				&				&				&
$10^{-14}$	&	$2^{20.32}$	&	$2^{-12.00}$	&	$2^{-12.91}$ 	\\
\hline
`8' vs `9'		&	784-2-1		&	0.9501		&	1573			&
$10^{-12}$	&	$2^{20.11}$	&	$2^{-10.58}$	&	$2^{-11.62}$ 	\\
			&				&				&				&
$10^{-14}$	&	$2^{20.32}$	&	$2^{-19.63}$	&	$2^{-21.72}$ 	\\
\hline
airplane vs 	&	3072-2-1		&	0.8120		&
6149			&	$10^{-12}$	&	$2^{22.08}$	&	$2^{-4.84}$	&	
$2^{-7.48}$ 	\\
automobile	&				&				&
			&	$10^{-14}$	&	$2^{22.29}$	&	$2^{-12.41}$	&	
$2^{-15.20}$ 	\\
\hline
bird vs cat		&	3072-2-1		&	0.6890		&	6149			&	
$10^{-12}$	&	$2^{22.07}$	&	$2^{-8.37}$	&	$2^{-9.80}$ 	\\
			&				&				&				&
$10^{-14}$	&	$2^{22.29}$	&	$2^{-12.27}$	&	$2^{-14.73}$ 	\\
\hline
deer vs dog	&	3072-2-1		&	0.6870		&	6149			&
$10^{-12}$	&	$2^{22.01}$	&	$2^{-9.55}$	&	$2^{-13.25}$ 	\\
			&				&				&				&
$10^{-14}$	&	$2^{22.22}$	&	$2^{-13.19}$	&	$2^{-15.82}$ 	\\
\hline
frog vs horse	&	3072-2-1		&	0.8405			&	6149			&
$10^{-12}$	&	$2^{22.08}$	&	$2^{-9.56}$	&	$2^{-10.71}$ 	\\
			&				&				&				&
$10^{-14}$	&	$2^{22.29}$	&	$2^{-13.58}$	&	$2^{-15.58}$ 	\\
\hline
ship vs truck	&	3072-2-1		&	0.7995		&	6149			&
$10^{-12}$	&	$2^{22.08}$	&	$2^{-8.63}$	&	$2^{-8.90}$ 	\\
			&				&				&				&
$10^{-14}$	&	$2^{22.29}$	&	$2^{-12.95}$	&	$2^{-13.02}$ 	\\
\hline
\multicolumn{6}{l}{
max$|\theta - \widehat{\theta}|$: the maximum extraction error
of model parameters. 
}		\\
\multicolumn{6}{l}{
accuracy: classification accuracy of the victim model $f_{\theta}$. 
}		\\
\multicolumn{6}{l}{
for saving space, prediction matching ratios are not listed. 
}		\\
\end{tabular}
\end{table}

The experimental results presented in 
Table~\ref{tab:attack_on_1_deep_nn} and 
Table~\ref{tab:attack_on_trained_nn} 
show that the attack performance 
(i.e., the value of $\varepsilon$ and 
$\text{max}|\theta - \widehat{\theta}|$)
 is related to the precision $\epsilon$ 
and the properties of the decision boundary. 
However, we do not find a clear quantitative relationship 
between the attack performance and the precision 
$\epsilon$ (or some unknown properties of the decision boundary). 
Considering that the unknown quantitative relationships 
do not influence the verification of the model extraction attack, 
we leave the problem of exploring 
the unknown relationships as a future work.

\section{Conclusion}
\label{sec:conclusion}

In this paper,  we have studied the model extraction attack 
against neural network models under the hard-label setting, 
i.e., the adversary only has access to the most likely class label 
corresponding to the raw output of neural network models. 
We propose new model extraction attacks that theoretically 
achieve functionally equivalent extraction.   
Practical experiments on numerous neural network models 
have verified the effectiveness of the proposed model extraction attacks.   
To the best of our knowledge, 
this is the first time to prove with practical experiments that 
it is possible to achieve functionally equivalent extraction 
against neural network models under the hard-label setting.

The future work will mainly focus on the following aspects:
\begin{itemize}
\item 
The (computation and query) complexity of 
our model extraction attack remains high,
which limits the application to neural networks 
with a large number of neurons.
Reducing the complexity is an important problem.

\item
In this paper, to recover the weight vector 
of the $j$-th neuron in layer $i$,
we require that in layer $i$, only the $j$-th neuron is active.
However, such a model activation pattern may not occur in some cases.
Then how to recover the weight vector of this neuron 
based on other model activation patterns 
would be a vital step towards better generality.

\item 
Explore possible quantitative relationships 
between the precision $\epsilon$ 
(or some unknown properties of the decision boundary) 
and $\varepsilon$ (or $\text{max} | \theta - \widehat{\theta} |$).

\item
Extend the extraction attack to the case of vector outputs, 
i.e., the output dimensionality exceeds $1$.

\item
Develop extraction attacks against
other kinds of neural network models.
\end{itemize}


\subsubsection{Acknowledgments. }
We would like to thank Adi Shamir for his guidance.
We would like to thank the anonymous reviewers for their detailed
and helpful comments.
This work was supported by the National Key R\&D Program of China (2018YFA0704701, 2020YFA0309705), 
Shandong Key Research and Development Program (2020ZLYS09), 
the Major Scientific and Technological Innovation Project of Shandong, 
China (2019JZZY010133), the Major Program of Guangdong Basic and Applied Research (2019B030302008), the Tsinghua University Dushi Program,
and the Ministry of Education in Singapore under Grant RG93/23.
Y. Chen was also supported by the Shuimu Tsinghua Scholar Program.

\appendix




\section{Proof of Lemma~\ref{lemma:w-i-of-k-nn}}
\label{appendix:proof_lemma_1}

We prove Lemma~\ref{lemma:w-i-of-k-nn} 
by Mathematical Induction.

\begin{proof}
When $i = 2$, according to Lemma~\ref{lemma:w-i-of-k-nn},
the extracted weight vector 
$\widehat{A}_j^{(2)}, j \in \{1, \cdots, d_2\}$ should be
\begin{equation}	\label{eq:expected_A2}
\begin{aligned}
\widehat{A}_{j}^{(2)} 
&= \left[
\frac{
w_{j,1}^{(2)} \times \left| \sum_{v = 1}^{d_{0}}{w_{1, v}^{(1)} C_{v, 1}^{(0)} } \right|
}
{
\left| \sum_{v = 1}^{d_{1}}{w_{j, v}^{(2)} C_{v, 1}^{(1)} }   \right|
},
\cdots,
\frac{
w_{j,d_{1}}^{(2)} \times 
\left| \sum_{v = 1}^{d_{0}}{w_{d_{1}, v}^{(1)} C_{v, 1}^{(0)} } \right|
}
{
\left| \sum_{v = 1}^{d_{1}}{w_{j, v}^{(2)} C_{v, 1}^{(1)} }   \right|
}
\right] \\
&=  \left[
\frac{
w_{j,1}^{(2)} \times \left| w_{1, 1}^{(1)} \right|
}
{
\left| \sum_{v = 1}^{d_{1}}{w_{j, v}^{(2)} C_{v, 1}^{(1)} }   \right|
},
\cdots,
\frac{
w_{j,d_{1}}^{(2)} \times 
\left| w_{d_{1}, 1}^{(1)} \right|
}
{
\left| \sum_{v = 1}^{d_{1}}{w_{j, v}^{(2)} C_{v, 1}^{(1)} }   \right|
}
\right]. \\
\end{aligned}
\end{equation}
Note that $C_{1, 1}^{(0)} = 1$ and
$C_{v, 1}^{(0)} = 0$ for $v \in \{2, \cdots, d_0\}$ .

Besides, we have
\begin{equation}
\begin{aligned}
C^{(1)} &= I_{\mathcal{P}}^{(1)} A^{(1)} = A^{(1)}
               = \left[ A_1^{(1)}, \cdots, A_{d_0}^{(1)} \right],  \\
\widehat{C}^{(1)} &= I_{\mathcal{P}}^{(1)} \widehat{A}^{(1)} = 
                                \widehat{A}^{(1)} = 
		\left[ \frac{A_1^{(1)}}{\left| w_{1,1}^{(1)} \right|}, \cdots, 
                           \frac{A_{d_0}^{(1)}}{\left| w_{d_0, 1}^{(1)} \right|} \right].  \\
\end{aligned}
\end{equation}
where $A_v^{(1)} = \left[ w_{v,1}^{(1)}, \cdots, w_{v, d_0}^{(1)} \right]$,
$C_{v, u}^{(1)} = w_{v,u}^{(1)}$ and 
$\widehat{C}_{v,u}^{(1)} = \frac{w_{v,u}^{(1)}}{\left| w_{v,1}^{(1)} \right| }$.

Look at the system of linear equations presented in
Eq.~\eqref{eq:soe-of-k-nn}.
Now, the system of linear equations 
is transformed into
\begin{equation}	
\left\{
\begin{array}{c}
\sum_{v = 1}^{d_{1}}{\widehat{w}_{j, v}^{(2)} \widehat{C}_{v, 1}^{(1)} }
=
\frac{
\sum_{v = 1}^{d_{1}}{w_{j, v}^{(2)} C_{v, 1}^{(1)} }
}
{
\left|  
\sum_{v = 1}^{d_{1}}{w_{j, v}^{(2)} C_{v, 1}^{(1)} } 
\right|
} \\
\vdots   \\
\sum_{v = 1}^{d_{1}}{\widehat{w}_{j, v}^{(2)} \widehat{C}_{v, d_0}^{(1)} }
=
\frac{
\sum_{v = 1}^{d_{1}}{w_{j, v}^{(2)} C_{v, d_0}^{(1)} }
}
{
\left|  
\sum_{v = 1}^{d_{1}}{w_{j, v}^{(2)} C_{v, 1}^{(1)} } 
\right|
} \\
\end{array}
\right.
\end{equation}
when $d_0 \geqslant d_1$, by solving the system, 
it is expected to obtain
\begin{equation}
\widehat{A}_j^{(2)} = \left[ 
\frac{
w_{j,1}^{(2)} \left| w_{1,1}^{(1)} \right|
}
{
\left|  
\sum_{v = 1}^{d_{1}}{w_{j, v}^{(2)} C_{v, 1}^{(1)} } 
\right|
},  \cdots, 
\frac{
w_{j,d_1}^{(2)} \left| w_{d_1,1}^{(1)} \right|
}
{
\left|  
\sum_{v = 1}^{d_{1}}{w_{j, v}^{(2)} C_{v, 1}^{(1)} } 
\right|
}, 
 \right],
\end{equation}
which is consistent with the expected value 
in Eq.~\eqref{eq:expected_A2}.

Next, consider the recovery of the weight vector of
the $j$-th neuron in layer $i$,
and assume that the weights $\widehat{A}^{(1)}, \cdots, \widehat{A}^{(i-1)}$
as shown in Lemma~\ref{lemma:w-i-of-k-nn}
have been obtained.
As a result, we have 
\begin{equation}
\begin{aligned}
C_{j}^{(i-2)} &= A_j^{(i-2)} A^{(i-3)} \cdots A^{(1)} ,  \,\,\,
C_j^{(i-1)} = A_j^{(i-1)} A^{(i-2)} \cdots A^{(1)} , \\
\widehat{C}_{j}^{(i-2)} &= \widehat{A}_j^{(i-2)} 
                                               \widehat{A}^{(i-3)} \cdots \widehat{A}^{(1)}
                                           = \frac{C_v^{(i-2)}}{
\left| \sum_{v = 1}^{d_{i-3}}{w_{j, v}^{(i-2)} C_{v, 1}^{(i-3)} }   \right|
} , \\
\widehat{C}_j^{(i-1)} &= \widehat{A}_j^{(i-1)} 
                                     \widehat{A}^{(i-2)} \cdots \widehat{A}^{(1)} 
                                     = \frac{C_v^{(i-1)}}{
\left| \sum_{v = 1}^{d_{i-2}}{w_{j, v}^{(i-1)} C_{v, 1}^{(i-2)} }   \right|
}. \\
\end{aligned}
\end{equation}


Now, the system of linear equations 
in Eq.~\eqref{eq:soe-of-k-nn} 
is transformed into
\begin{equation}	
\left\{
\begin{array}{c}
\sum_{u = 1}^{d_{i-1}}{\widehat{w}_{j, u}^{(i)} 
\frac{C_{u, 1}^{(i-1)}}{
\left| \sum_{v = 1}^{d_{i-2}}{w_{u, v}^{(i-1)} C_{v, 1}^{(i-2)} }   \right|
}
}
=
\frac{
\sum_{u = 1}^{d_{i-1}}{w_{j, u}^{(i)} C_{u, 1}^{(i-1)} }
}
{
\left|  
\sum_{u = 1}^{d_{i-1}}{w_{j, u}^{(i)} C_{u, 1}^{(i-1)} } 
\right|
} \\
\vdots   \\
\sum_{u = 1}^{d_{i-1}}{\widehat{w}_{j, u}^{(i)} 
\frac{C_{u, d_0}^{(i-1)}}{
\left| \sum_{v = 1}^{d_{i-2}}{w_{u, v}^{(i-1)} C_{v, 1}^{(i-2)} }   \right|
}
}
=
\frac{
\sum_{u = 1}^{d_{i-1}}{w_{j, u}^{(i)} C_{u, d_0}^{(i-1)} }
}
{
\left|  
\sum_{u = 1}^{d_{i-1}}{w_{j, u}^{(i)} C_{u, 1}^{(i-1)} } 
\right|
} \\
\end{array}
\right.
\end{equation}
When $d_0 \geqslant d_{i-1}$,
by solving this system, it is expected to obtain
\begin{equation}
\begin{aligned}
\widehat{A}_{j}^{(i)} 
&=
\left[ \widehat{w}_{j,1}^{(i)}, \cdots, \widehat{w}_{j,d_{i-1}}^{(i)} \right] \\
&= \left[
\frac{
w_{j,1}^{(i)} \times \left| \sum_{v = 1}^{d_{i-2}}{w_{1, v}^{(i-1)} C_{v, 1}^{(i-2)} } \right|
}
{
\left| \sum_{v = 1}^{d_{i-1}}{w_{j, v}^{(i)} C_{v, 1}^{(i-1)} }   \right|
},
\cdots,
\frac{
w_{j,d_{i-1}}^{(i)} \times 
\left| \sum_{v = 1}^{d_{i-2}}{w_{d_{i-1}, v}^{(i-1)} C_{v, 1}^{(i-2)} } \right|
}
{
\left| \sum_{v = 1}^{d_{i-1}}{w_{j, v}^{(i)} C_{v, 1}^{(i-1)} }   \right|
}
\right], \nonumber
\end{aligned}
\end{equation}
which is consistent with the expected value 
in Eq.~\eqref{eq:expected_Ai}.
\end{proof}


\section{Complexity of Hard-Label Model Extraction Attacks}
\label{appendix:attack_complexity}

For the $k$-deep neural network extraction attack,
its complexity is composed of two parts:
Oracle query complexity and computation complexity. 
Suppose that the number of neurons is $n = \sum_{i=1}^{k}{d_i}$.
Its input size, i.e., the size of $x$ is $d_0$.
And $k$ is the number of hidden layers.
The precision adopted by binary search is $\epsilon$
(refer to Section~\ref{subsec:find_DBP}).

\paragraph{\textup{\textbf{Oracle Query Complexity. }}}
For the $k$-deep neural network extraction attack,
we only query the Oracle in Steps 1 and 2 
(see Section~\ref{sec:overview}).

In Step 1,
if $c_n \times 2^n$ decision boundary points are collected,
then the number of queries to the Oracle is
$c_{\epsilon} \times c_n \times 2^n$,
where $c_{\epsilon}$ is a factor determined by
the precision $\epsilon$,
and $c_n$ is a small factor defined by the attacker.
In Step 2, 
for each decision boundary point $x$ collected in Step 1,
to recover the corresponding affine transformation 
(i.e., $\varGamma_{\mathcal{P}}$ and $B_{\mathcal{P}}$), 
we need to collect another $d_0 - 1$ decision boundary points.
Therefore, the times of querying the Oracle in this step is
$c_{\epsilon} \times c_n \times 2^n \times (d_0 - 1)$.
Based on the above analysis,
the Oracle query complexity of our $k$-deep neural network
extraction attack is $c_{\epsilon} \times c_n \times 2^n \times d_0$.
Note that $c_{\epsilon}$ 
is proportional to ${\rm log}_2^{ \frac{1}{\epsilon} }$.
Thus, the query complexity is about 
$\mathcal{O} \left( d_0 \times 2^n 
\times {\rm log}_2^{ \frac{1}{\epsilon} } \right)$.

\paragraph{\textup{\textbf{Computation Complexity.  }}}
For the $k$-deep neural network extraction attack,
when $n$ is large,
most computations are occupied 
by recovering neural network parameters, 
i.e., Steps 3 and 4 (see Section~\ref{sec:overview}).
Suppose that there are $N \leqslant c_n \times 2^n$ 
decision boundary points used to recover neural network parameters
after filtering duplicate affine transformations in Step 2.

In Step 3, 
to recover the weight vector of the $j$-th neuron 
in layer $i$ where $i \in \{2, \cdots, k+1\}$,
we need to solve a system of linear equations.
For convenience, let us ignore the difference in the sizes 
of the different systems of linear equations. 
Then, to recover all the weights $A^{(i)}$, 
a total of $n+1 - d_1 = \sum_{i=2}^{k+1}{d_i}$
systems of linear equations need to be solved.
In Step 4, to recover all the biases $b^{(1)}$,
only one system of linear equations needs to be solved.
Therefore, to obtain an extracted model,
we need to solve $n+2-d_1$ systems of linear equations.

There are two loops in the extraction attack.
First, we need to select $n+1$ out of $N$ 
decision boundary points each time.
More concretely,
to recover the weights $A^{(i)}$ in layer $i$,
we choose $d_i$ decision boundary points.
Then the number (denoted by $l_1$) of possible cases is
\begin{equation}
l_1 = 
\tbinom{N}{d_1} \times \tbinom{N - d_1}{d_2} \times \cdots \times
\tbinom{N - \sum_{i=1}^{k-1}{d_i}}{d_k} \times 
\tbinom{N - n}{1} 
\approx N^{n+1}, \,\, {\rm for} \,\, N \gg n . \nonumber
\end{equation}
Second, we need to guess $k$ signs when recovering all the weights,
i.e., there are $2^k$ cases.


Thus, the computation complexity
is about $\mathcal{O}\left( l_1 \times 2^k \times (n+2-d_1) \right)$.
When an appropriate precision $\epsilon$ 
(i.e., $\epsilon$ is small) is adopted,
we have $N \approx H < 2^n$, 
where $H$ is the number of possible model activation patterns
(refer to Lemma~\ref{lem:num_of_MAPs}).
Then, we further have
\begin{equation}
l_1 \times 2^k \times (n+2-d_1)
\approx N^{n+1} \times 2^k \times n
\approx n \times 2^{n(n+1) + k}.
\end{equation}
Thus, the computation complexity is about
$\mathcal{O} \left( n \times 2^{n^2+n+k} \right)$.



\section{Extraction on $1$-Deep Neural Networks}
\label{appendix:attack_on_1_deep_nn}

The parameters of the extracted $1$-deep neural network are as follows.

\begin{equation}
\begin{aligned}
\widehat{A}_i^{(1)} &= 
\left[ \widehat{w}_{i,1}^{(1)}, \cdots, \widehat{w}_{i,d_0}^{(1)} \right] =
\left[ \frac{w_{i,1}^{(1)}}{ \left|w_{i,1}^{(1)} \right|},
\cdots,
         \frac{w_{i,d_0}^{(1)}}{ \left|w_{i,1}^{(1)} \right|} \right],  \,\,
i \in \{1, \cdots, d_1\},		\\	
\widehat{b}^{(1)} &= [\widehat{b}_1^{(1)}, \cdots, \widehat{b}_{d_1}^{(1)}]
                                = \left[  
                                 \frac{b_1^{(1)}}{\left| w_{1,1}^{(1)} \right|},  \cdots,
			      \frac{b_{d_1}^{(1)}}{\left| w_{d_1,1}^{(1)} \right|}
                                \right], 		\\
\widehat{A}^{(2)} &= \left[ \widehat{w}_1^{(2)}, 
                                         \cdots, \widehat{w}_{d_1}^{(2)} \right]
                             = \left[ 
                                       \frac{w_1^{(2)} \left|w_{1,1}^{(1)}\right|}
                                              {\left| \sum_{i=1}^{d_1}{w_i^{(2)} w_{i,1}^{(1)}} \right|},
                                       \cdots,
                                       \frac{w_{d_1}^{(2)} \left|w_{d_1,1}^{(1)}\right|}
                                              {\left| \sum_{i=1}^{d_1}{w_i^{(2)} w_{i,1}^{(1)}} \right|}
                                \right],		\\	
\widehat{b}^{(2)} &= \frac{b^{(2)}}
                                        {\left| \sum_{i=1}^{d_1}{w_i^{(2)} w_{i,1}^{(1)}} \right|}  .\\
\end{aligned}
\end{equation}

Fig.~\ref{fig:attack_on_1_deep_nn} shows a diagram of
a victim model (2-2-1) and the extracted model. 

\begin{figure}[htb]
\centering
\includegraphics[width=1.0\textwidth]{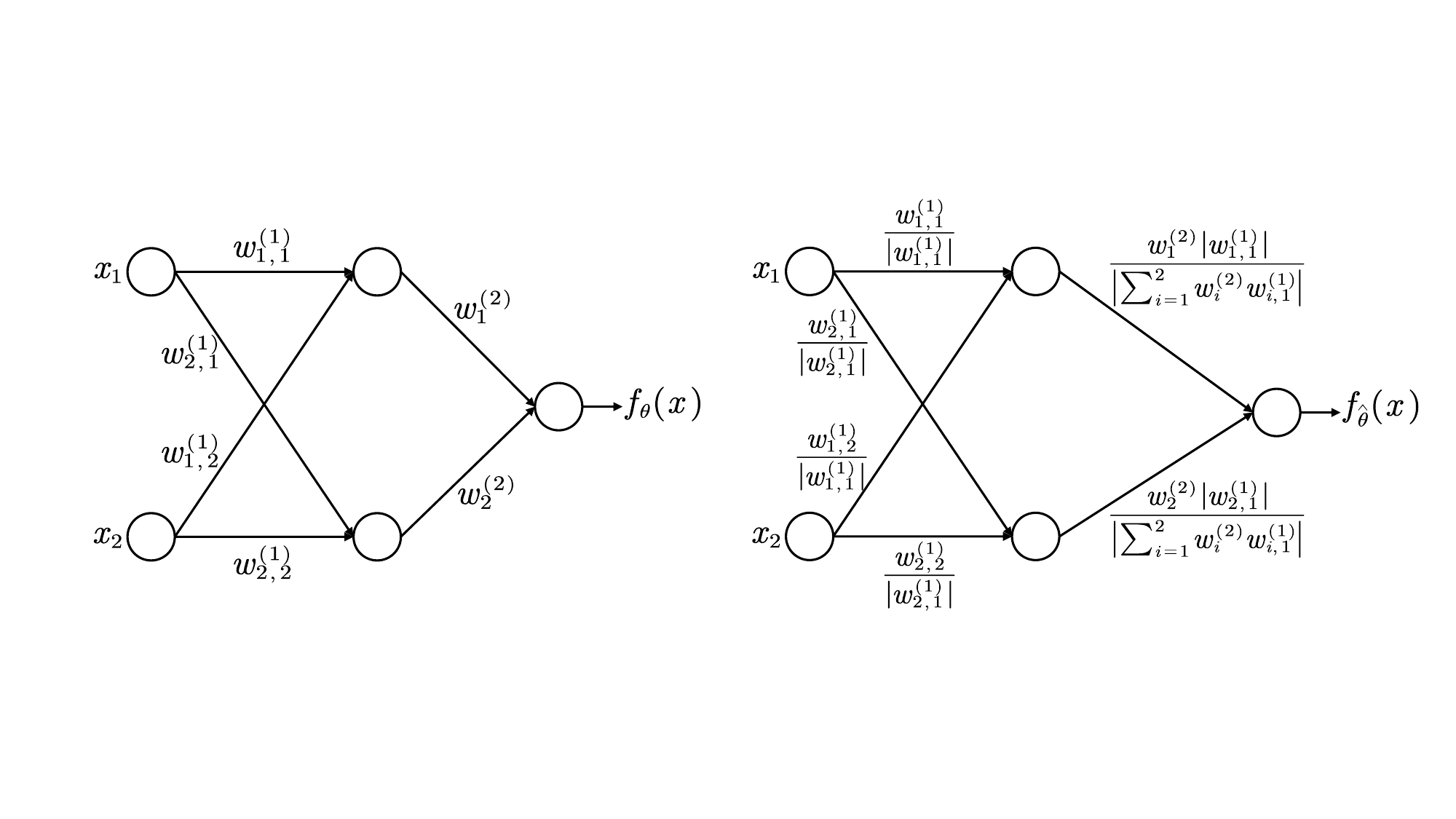}
\caption{
Left: the victim model $f_{\theta}$.
Right: the extracted model $f_{\widehat{\theta}}$.
}
\label{fig:attack_on_1_deep_nn}
\end{figure}


%
%
%
\bibliographystyle{splncs04}
\bibliography{reference}

\begin{thebibliography}{10}
\providecommand{\url}[1]{\texttt{#1}}
\providecommand{\urlprefix}{URL }
\providecommand{\doi}[1]{https://doi.org/#1}

\bibitem{DBLP:conf/uss/BatinaBJP19}
Batina, L., Bhasin, S., Jap, D., Picek, S.: {CSI} {NN:} reverse engineering of
  neural network architectures through electromagnetic side channel. In:
  Heninger, N., Traynor, P. (eds.) USENIX Security 2019. pp. 515--532. {USENIX}
  Association

\bibitem{DBLP:conf/siemens/BlumR93}
Blum, A., Rivest, R.L.: Training a 3-node neural network is np-complete. In:
  Hanson, S.J., Remmele, W., Rivest, R.L. (eds.) Machine Learning: From Theory
  to Applications - Cooperative Research at Siemens and {MIT}. LNCS, vol.~661,
  pp. 9--28. Springer (1993)

\bibitem{DBLP:journals/iacr/CanalesMartinezCHRSS23}
Canales{-}Mart{\'{\i}}nez, I., Ch{\'{a}}vez{-}Saab, J., Hambitzer, A.,
  Rodr{\'{\i}}guez{-}Henr{\'{\i}}quez, F., Satpute, N., Shamir, A.: Polynomial
  time cryptanalytic extraction of neural network models. {IACR} Cryptol.
  ePrint Arch. p.~1526 (2023)

\bibitem{DBLP:conf/crypto/CarliniJM20}
Carlini, N., Jagielski, M., Mironov, I.: Cryptanalytic extraction of neural
  network models. In: Micciancio, D., Ristenpart, T. (eds.) CRYPTO 2020. LNCS,
  vol. 12172, pp. 189--218. Springer

\bibitem{DBLP:conf/sp/Carlini017}
Carlini, N., Wagner, D.A.: Towards evaluating the robustness of neural
  networks. In: SP 2017. pp. 39--57. {IEEE} Computer Society

\bibitem{DBLP:journals/ijon/DubeySC22}
Dubey, S.R., Singh, S.K., Chaudhuri, B.B.: Activation functions in deep
  learning: {A} comprehensive survey and benchmark. Neurocomputing
  \textbf{503},  92--108 (2022)

\bibitem{Fefferman1994ReconstructingAN}
Fefferman, C.: Reconstructing a neural net from its output. Revista Matematica
  Iberoamericana  \textbf{10},  507--555 (1994)

\bibitem{DBLP:conf/ilp/GalstyanC07}
Galstyan, A., Cohen, P.R.: Empirical comparison of "hard" and "soft" label
  propagation for relational classification. In: Blockeel, H., Ramon, J.,
  Shavlik, J.W., Tadepalli, P. (eds.) ILP 2007. LNCS, vol.~4894, pp. 98--111.
  Springer

\bibitem{DBLP:conf/ccs/GanjuWYGB18}
Ganju, K., Wang, Q., Yang, W., Gunter, C.A., Borisov, N.: Property inference
  attacks on fully connected neural networks using permutation invariant
  representations. In: Lie, D., Mannan, M., Backes, M., Wang, X. (eds.) CCS
  2018. pp. 619--633. {ACM}

\bibitem{DBLP:conf/nips/HanPTD15}
Han, S., Pool, J., Tran, J., Dally, W.J.: Learning both weights and connections
  for efficient neural network. In: Cortes, C., Lawrence, N.D., Lee, D.D.,
  Sugiyama, M., Garnett, R. (eds.) NeurIPS 2015. pp. 1135--1143

\bibitem{DBLP:conf/uss/JagielskiCBKP20}
Jagielski, M., Carlini, N., Berthelot, D., Kurakin, A., Papernot, N.: High
  accuracy and high fidelity extraction of neural networks. In: Capkun, S.,
  Roesner, F. (eds.) USENIX Security 2020. pp. 1345--1362. {USENIX} Association

\bibitem{DBLP:journals/pieee/LeCunBBH98}
LeCun, Y., Bottou, L., Bengio, Y., Haffner, P.: Gradient-based learning applied
  to document recognition. Proc. {IEEE}  \textbf{86}(11),  2278--2324 (1998)

\bibitem{DBLP:conf/wacv/LongCSH19}
Long, C., Collins, R., Swears, E., Hoogs, A.: Deep neural networks in fully
  connected {CRF} for image labeling with social network metadata. In: WACV
  2019. pp. 1607--1615. {IEEE}

\bibitem{DBLP:conf/kdd/LowdM05}
Lowd, D., Meek, C.: Adversarial learning. In: Grossman, R., Bayardo, R.J.,
  Bennett, K.P. (eds.) SIGKDD 2005. pp. 641--647. {ACM}

\bibitem{DBLP:conf/icml/NairH10}
Nair, V., Hinton, G.E.: Rectified linear units improve restricted {Boltzmann}
  machines. In: F{\"{u}}rnkranz, J., Joachims, T. (eds.) ICML, 2010. pp.
  807--814. Omnipress

\bibitem{DBLP:journals/csur/OliynykMR23}
Oliynyk, D., Mayer, R., Rauber, A.: I know what you trained last summer: {A}
  survey on stealing machine learning models and defences. {ACM} Comput. Surv.
  \textbf{55}(14s),  324:1--324:41 (2023)

\bibitem{DBLP:conf/iccv/PerazziWGS15}
Perazzi, F., Wang, O., Gross, M.H., Sorkine{-}Hornung, A.: Fully connected
  object proposals for video segmentation. In: ICCV 2015. pp. 3227--3234.
  {IEEE} Computer Society

\bibitem{DBLP:conf/icml/RolnickK20}
Rolnick, D., Kording, K.P.: Reverse-engineering deep relu networks. In: ICML
  2020. Proceedings of Machine Learning Research, vol.~119, pp. 8178--8187.
  {PMLR}

\bibitem{DBLP:journals/pami/TorralbaFF08}
Torralba, A., Fergus, R., Freeman, W.T.: 80 million tiny images: {A} large data
  set for nonparametric object and scene recognition. {IEEE} Trans. Pattern
  Anal. Mach. Intell.  \textbf{30}(11),  1958--1970 (2008)

\bibitem{DBLP:conf/uss/TramerZJRR16}
Tram{\`{e}}r, F., Zhang, F., Juels, A., Reiter, M.K., Ristenpart, T.: Stealing
  machine learning models via prediction apis. In: Holz, T., Savage, S. (eds.)
  USENIX Security 2016. pp. 601--618. {USENIX} Association

\end{thebibliography}

\end{document}